\newtheorem{theorem}{Theorem}
\newtheorem{lemma}[theorem]{Lemma}
\newtheorem{remark}[theorem]{Remark}
\newtheorem{example}[theorem]{Example}
\newtheorem{proposition}[theorem]{Proposition}
\newtheorem{construction}[theorem]{Construction}
\Crefname{equation}{Eq.}{Eqs.}
\Crefname{figure}{Fig.}{Figs.}
\Crefname{construction}{Constr.}{Constrs.}
\Crefname{example}{Ex.}{Exs.}
\pgfplotsset{compat=1.18}
\newcommand{\6}{\mathbf}
\newcommand{\code}{\mathcal{C}}
\newcommand{\C}{\mathcal{C}}
\newcommand{\F}{\mathbb{F}}
\newcommand{\AmbSpace}{\mathcal{F}}
\newcommand{\NN}{\mathbb{N}}
\newcommand{\bLw}{{\mathcal{B}_{\bLam}}}
\newcommand{\e}{\6e}
\newcommand{\cstar}{\6c^\star}
\newcommand{\jstar}{j^\star}
\newcommand{\codeA}{\mathcal{A}}
\newcommand{\codeB}{\mathcal{B}}
\newcommand{\bn}{\6n}
\newcommand{\bLam}{\bm{\lambda}}
\newcommand{\dirprod}{\times}
\newcommand{\enc}[1]{\mathsf{enc}(#1)}
\newcommand{\encinv}[1]{\mathsf{enc}^{-1}(#1)}
\newcommand{\dec}[1]{\mathsf{dec}(#1)}
\newcommand{\GMDdec}[1]{\mathsf{GMD\text{-}dec}\mleft(#1\mright)}
\newcommand{\nC}[1]{n(#1)}
\newcommand{\kC}[1]{k(#1)}
\newcommand{\dHC}[1]{d_\mathrm{H}(#1)}
\newcommand{\dLC}[1]{d_{\bLam}(#1)}
\newcommand{\tLC}[1]{t_{\bLam}(#1)}
\newcommand{\LBtLC}[1]{{t}'_{\bLam}(#1)}
\newcommand{\LBdLC}[1]{{d}'_{\bLam}(#1)}
\newcommand{\size}[1]{\left\lvert#1\right\rvert}
\newcommand{\wH}[1]{w_{\scriptscriptstyle \mathrm{H}}(#1)}
\newcommand{\dH}[2]{d_{\scriptscriptstyle \mathrm{H}}(#1,#2)}
\newcommand{\wN}[1]{w_{\bn}(#1)}
\newcommand{\wL}[1]{w_{\bLam}(#1)}
\newcommand{\dL}[2]{d_{\bLam}(#1,#2)}
\newcommand{\tL}[1]{t_{\bLam}(#1)}
\newcommand{\ra}{{\6r_{\6a}}}
\newcommand{\rb}{{\6r_{\6b}}}
\newcommand{\Uset}{\mathcal{U}}
\newcommand{\Vset}{\mathcal{V}}
\newcommand{\Eset}{\mathcal{E}}
\newcommand{\Gset}{\mathcal{G}}
\newcommand{\Gsetstar}{\mathcal{G}^\star}
\newcommand{\extdeg}{m}
\pgfplotsset{
  GVstyle/.style={
    line width=1pt,
    dashed,
    color=gray,
  },
  naiveGVstyle/.style={
    line width=1pt,
    densely dotted,
    color=gray,
  },
  LPstyle/.style={
    line width=1pt,
    dashed,
    color=YellowOrange,
  },
  naiveLPstyle/.style={
    line width=1pt,
    densely dotted,
    color=YellowOrange,
  },
  Singletonstyle/.style={
    line width=1pt,
    dashed,
    color=CornflowerBlue,
  },
  naiveSingletonstyle/.style={
    line width=1pt,
    densely dotted,
    color=CornflowerBlue,
  },
  SPstyle/.style={
    line width=1pt,
    dashed,
    color=Green,
  },
  naiveSPstyle/.style={
    line width=1pt,
    densely dotted,
    color=Green,
  },  
  newCstyle/.style={
    mark = triangle*,
    color = RoyalBlue,
    only marks,
  },  
  oldCstyle/.style={
    mark = *,
    color = Red,
    only marks,
  }, 
}
\begin{document}
\title{Weighted-Hamming Metric: Bounds and Codes} 

\author{%
  \IEEEauthorblockN{\textbf{Sebastian Bitzer}$^1$, \textbf{Alberto Ravagnani}$^2$, and \textbf{Violetta Weger}$^1$}
  \IEEEauthorblockA{$^1$Technical University of Munich, Germany, $^2$Eindhoven University of Technology, the Netherlands \\
                    \textit{\{sebastian.bitzer, violetta.weger\}@tum.de}, 
                    \textit{a.ravagnani@tue.nl}
                    }
}

\maketitle

\begin{abstract}
The weighted-Hamming metric generalizes the Hamming metric by assigning different weights to blocks of coordinates.
It is well-suited for applications such as coding over independent parallel channels,
each of which has a different level of importance or noise.
From a coding-theoretic perspective, the actual error-correction capability of a code under this metric can exceed half its minimum distance.
In this work, we establish direct bounds on this capability, tightening those obtained via minimum-distance arguments.
We also propose a flexible code construction based on generalized concatenation and show that these codes can be efficiently decoded up to a lower bound on the error-correction capability.
\end{abstract}

\section{Introduction}

\textbf{Weighted-Hamming metric codes.} \ 
The weighted-Hamming metric (wHm) generalizes the ubiquitous Hamming metric by scaling the Hamming weight of $m$ blocks of coordinates of a vector by fixed weights $\lambda_1,\ldots,\lambda_m\in\NN$.
This assignment of weights can be used to reflect different importance levels of symbols or their vulnerability to errors, which makes the weighted-Hamming metric suitable to model a wide range of applications, including parallel independent channels \cite{bitzer2024weighted} and similarity search \cite{zhang2013binary}.
Nevertheless, the theory of wHm codes is still in its early stages and remains largely underdeveloped.

Previous works~\cite{moon2018weighted,bezzateev2013class} investigated the wHm properties of classical codes, and in~\cite{bitzer2024weighted} a systematic study was initiated.
Importantly, \cite{bitzer2024weighted} observes that the minimum weighted distance $\dLC{\C}$ does \textit{not} fully characterize the wHm error-correction capability $\tLC{\C}$ of the code $\C$, i.e., the number of uniquely correctable errors.
As a natural starting point, \cite{bitzer2024weighted} derives bounds on $\tLC{\C}$ through bounds on $\dLC{\C}$ and introduces a simple code construction.

\textbf{This work.} \ 
In this paper, we shift the focus 
from the minimum distance to the actual error-correction capability.
We directly provide bounds on $\tLC{\C}$ and, by avoiding the detour via the minimum distance, obtain tighter estimates for the number of correctable errors.
In addition, we introduce a flexible method for constructing wHm codes based on the framework of generalized code concatenation, which encompasses the coding scheme proposed in \cite{bitzer2024weighted}.
As a result, we obtain lower bounds on $\tLC{\C}$, denoted by $\LBtLC{\C}$, and even new bounds on $\dLC{\C}$, denoted by $\LBdLC{\C}$. 
These bounds indicate that, for short lengths, codes with close-to-optimal parameters can be constructed.
Furthermore, if efficient decoders are available for the ``component'' codes, the proposed code construction can be decoded efficiently up to $\LBtLC{\C}$, which in several cases exceeds half the minimum distance.

\textbf{Artifacts.} \ A Sagemath implementation of the proposed bounds is available at \url{github.com/sebastianbitzer/wHm}.

\section{Preliminaries}\label{sec:wHm}
\textbf{Notation.} \ 
Vectors are denoted with bold lowercase letters, and matrices with bold uppercase letters. 
We write $[m] \coloneqq \{1,\ldots,m\}$ and denote by $\F_q$ the finite field of $q$ elements.

\textbf{Hamming metric.} \ 
A $q$-ary linear code $\C$ of length $\nC{\C} = n$ and dimension $\kC{\C} = k$ is a $k$-dimensional linear subspace of $\F_q^{n}$.
Commonly, $\C$ is endowed with  the Hamming distance
\[
\dH{\6a}{\6b} \coloneqq \wH{\6a-\6b} \coloneqq \size{\{i \in [n]:a_i-b_i \neq 0\}}.
\]
The minimum distance of $\C$ is denoted as $\dHC{\C}$.
We associate with $\C$ an $\F_q$-linear encoder and a decoder, which may signal a decoding failure by returning $\perp$:
\begin{align*}
&\C.\enc{\cdot}\colon& \F_q^{k} &\to \C, & \6m &\mapsto \6c; \\ &
\C.\dec{\cdot}\colon& \F_q^{n} &\to \C\cup\{\perp\}, & \6v &\mapsto \hat{\6m}.
\end{align*}

\textbf{Block weight.} \ 
Let $\bn = (n_1, \ldots, n_m) \in \NN^m$ be a partition of $n = \sum_{\ell\in[m]} n_\ell$.
The \emph{block weight}~\cite{simonis} of $\6a = (\6a_1,\ldots,\6a_m) \in \F_q^{n}$ with $\6a_\ell \in \F_q^{n_\ell}$ is defined as the tuple
\[
\wN{\6a} = \left(\wH{\6a_1}, \ldots, \wH{\6a_m}\right) .
\]
For a single block, $m=1$ and $\bn = (n)$,  the block weight coincides with the Hamming weight. Additionally, the block weight induces the partial order
\[
\wN{\6a} \preceq \wN{\6b} \iff \wH{\6a_\ell} \leq \wH{\6b_\ell}\ \forall \ell \in [m].
\]

\textbf{Weighted-Hamming metric.} \ 
This work concerns the {weighted-Hamming weight}, a generalization of the Hamming weight that is coarser than the block weight.
Fix $\bLam = (\lambda_1,\ldots,\lambda_m) \in \NN^m$ and a vector of scaling coefficients with $\lambda_1\leq \ldots \leq \lambda_m$.
The
\textit{weighted-Hamming weight}
of $\6a = (\6a_1,\ldots,\6a_m)\in\F_q^n$, with $\6a_\ell \in \F_q^{n_\ell}$, is defined as
\[
\wL{\6a} \coloneqq \sum_{\ell\in[m]} \lambda_\ell \cdot \wH{\6a_\ell}.
\]
The weighted-Hamming ball and its difference set are defined as
\[
\bLw(t) = \{\6x\colon \wL{\6x} \leq t\}, \
\Delta\bLw(t) = \{\6x-\6y\colon \6x,\6y\in\bLw(t) \}.
\]

\textbf{Error correction.} \ 
As shown in~\cite{loeliger1994basic}, a linear code $\code$ uniquely corrects all wHm errors of weight at most $t$ if and only if 
\vspace{-0.1cm}\[
\code \cap \Delta\bLw(t) = \{\60\}. 
\]
The maximum $t$ for which this holds is the (wHm) error-correction capability $\tLC{\code}$, which may also be computed as
\begin{align*}
\tLC{\code} &\coloneqq \min_{\6c\in\code\setminus\{\60\}} \tL{\6c}\\ 
\text{with} \ \tL{\6c} &\coloneqq \min_{\6r} \max\{\wL{\6r}, \wL{\6c-\6r}\} - 1.
\end{align*}
The following lemma summarizes some relations between the introduced measures.

\begin{lemma}\label{lem:sub_support}
Let $\6a,\6b\in\F_q^n$ with $\wN{\6a} \preceq \wN{\6b}$.
Then,  $\wL{\6a} \leq \wL{\6b}$ and $\tL{\6a} \leq \tL{\6b}$.
Both $\wL{\6a}$ and $\tL{\6a}$ are fully determined by  $\wN{\6a}$.
\end{lemma}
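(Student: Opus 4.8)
The plan is to treat the three claims in order of difficulty. For the first claim, recall $\wL{\6a}=\sum_{\ell\in[m]}\lambda_\ell\wH{\6a_\ell}$, while $\wN{\6a}\preceq\wN{\6b}$ means exactly that $\wH{\6a_\ell}\le\wH{\6b_\ell}$ for every $\ell$; since each $\lambda_\ell\ge 0$, summing termwise gives $\wL{\6a}\le\wL{\6b}$, and the same formula shows that $\wL{\6a}$ depends on $\6a$ only through $(\wH{\6a_\ell})_{\ell\in[m]}=\wN{\6a}$. It remains to prove the two analogous statements for $\tL{\cdot}$.

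The crux is to rewrite $\tL{\6c}$ so that the minimization over all splittings $\6r\in\F_q^n$ is replaced by one over integer tuples that only sees $\wN{\6c}$. Concretely, I would establish
\[
\tL{\6c}+1=\min_{\substack{(s_\ell)_{\ell\in[m]}\\ 0\le s_\ell\le\wH{\6c_\ell}}}\max\Bigl\{\sum_{\ell\in[m]}\lambda_\ell s_\ell,\ \sum_{\ell\in[m]}\lambda_\ell\bigl(\wH{\6c_\ell}-s_\ell\bigr)\Bigr\}.
\]
For ``$\le$'', given a feasible tuple $(s_\ell)$, build $\6r$ by copying $\6c$ on $s_\ell$ arbitrarily chosen coordinates of $\supp(\6c_\ell)$ and zeroing out the rest; then $\wH{\6r_\ell}=s_\ell$ and $\wH{(\6c-\6r)_\ell}=\wH{\6c_\ell}-s_\ell$, so this $\6r$ realizes the bracketed value. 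For ``$\ge$'', given any $\6r$, let $s_\ell$ be the number of coordinates in block $\ell$ where $\6c$ is nonzero but $\6c-\6r$ vanishes; every such coordinate lies in $\supp(\6r_\ell)$, hence $\wH{\6r_\ell}\ge s_\ell$, while the remaining $\wH{\6c_\ell}-s_\ell$ coordinates of $\supp(\6c_\ell)$ stay nonzero in $\6c-\6r$, hence $\wH{(\6c-\6r)_\ell}\ge\wH{\6c_\ell}-s_\ell$; weighting by $\lambda_\ell$ and taking maxima shows $\max\{\wL{\6r},\wL{\6c-\6r}\}$ dominates the bracketed value for this $(s_\ell)$. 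Since the right-hand side of the displayed identity depends on $\6c$ only through $\wN{\6c}$, this also settles the ``fully determined'' claim for $\tL{\cdot}$.

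For monotonicity I would compare the two optimization problems via the displayed identity. Fix an optimal tuple $(s^{\6b}_\ell)$ for $\6b$ and set $s^{\6a}_\ell\coloneqq\min\{s^{\6b}_\ell,\wH{\6a_\ell}\}$; using $\wH{\6a_\ell}\le\wH{\6b_\ell}$, a two-line case distinction shows $0\le s^{\6a}_\ell\le\wH{\6a_\ell}$ and, simultaneously, $s^{\6a}_\ell\le s^{\6b}_\ell$ and $\wH{\6a_\ell}-s^{\6a}_\ell\le\wH{\6b_\ell}-s^{\6b}_\ell$. Hence both sums, and therefore their maximum, do not increase when passing from $(s^{\6b}_\ell)$ to the feasible tuple $(s^{\6a}_\ell)$, so $\tL{\6a}+1\le\tL{\6b}+1$. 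The main obstacle is the ``$\ge$'' direction of the reformulation: one must rule out the possibility that some $\6r$ with entries outside $\{0\}\cup\{c_i\}$, or with support leaving $\supp(\6c)$, beats the restricted optimum, and the coordinate bookkeeping above is exactly what does this; the remaining steps are routine.
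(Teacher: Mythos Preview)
Your proof is correct and takes a route different from the paper's. The paper stays at the vector level: it picks an optimal splitting $\rb$ of $\6b$, notes that one may take $r_{\6b,i}\in\{0,b_i\}$, and then builds $\ra$ coordinatewise by setting $r_{\6a,i}=a_i$ exactly when $r_{\6b,i}\neq 0$; the two comparisons $\wL{\ra}\le\wL{\rb}$ and $\wL{\6a-\ra}\le\wL{\6b-\rb}$ are then argued from support inclusions. You instead first rewrite $\tL{\6c}+1$ as a purely combinatorial minimum over integer tuples $(s_\ell)$ with $0\le s_\ell\le\wH{\6c_\ell}$, and only afterwards compare the two optimization problems via the truncation $s^{\6a}_\ell=\min\{s^{\6b}_\ell,\wH{\6a_\ell}\}$. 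The payoff of your abstraction is twofold: the ``fully determined by $\wN{\6a}$'' claim drops out immediately from the reformulation (the paper's proof does not address it explicitly), and you never need the supports of $\6a$ and $\6b$ to be aligned, something the paper's coordinatewise transfer tacitly relies on. The paper's argument, in turn, is shorter and more concrete once the reduction $r_{\6b,i}\in\{0,b_i\}$ is accepted; your reformulation is what makes that reduction rigorous and reusable.
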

\begin{proof}
Since $\wL{\6a} \leq \wL{\6b}$ is immediate, we focus on $\tL{\6a} \leq \tL{\6b}$.
Let $\rb = \arg \min_{\6r\in\F_q^n} \max(\wL{\6r},\wL{\6b-\6r}) - 1$.
Note that $r_{\6b,i} \in\{b_i, 0\}$.
Define $\ra \in \F_q^n$ as
\[
r_{\6a,i} = 
\begin{cases}
a_i & \text{if } r_{\6b,i} \neq 0,\\
0   & \text{else,}
\end{cases}
\]
which implies  $\wL{\rb} \geq \wL{\ra}$. 
Further, $\wL{\6b-\rb} \geq \wL{\6a-\ra}$ since $a_i - r_{\6a,i} = 0$ for all $i$ with $b_i - r_{\6b, i} = 0$.
Thus
\begin{align*}
\tL{\6b} &= \max\{\wL{\rb}, \wL{\6b-\rb}\} -1 \\
&\geq \max\{\wL{\ra}, \wL{\6a-\ra}\} - 1 \geq \tL{\6a}. \quad \qedhere 
\end{align*}
\end{proof} 

\section{Bounds}\label{sec:bounds}

This section discusses bounds on the maximum error-correction capability of a $k$-dimensional wHm code.

\textbf{Distance-based bounds.} \ 
In \cite{bitzer2024weighted}, various bounds for the minimum weighted-Hamming distance of a code were established.
A first observation is that these bounds on $\dLC{\code}$ can be used to obtain bounds on $\tLC{\code}$ via the relation
\begin{equation}\label{eq:bound_d_t}
\left\lfloor \frac{\dLC{\C}-1}{2} \right\rfloor \leq \tLC{\C} \leq \left\lfloor \frac{\dLC{\C}+ \lambda_m}{2} \right\rfloor - 1.
\end{equation}
The following lemma, a direct consequence of \Cref{eq:bound_d_t},
spells out the requirements on the minimum distance for achieving a given error-correction capability.
\begin{lemma}\label{lem:d_vs_t}
A code $\code$ with $\dLC{\code} = 2t+1$ is guaranteed to satisfy $\tLC{\C} \geq t$.
If $\dLC{\code} < 2(t+1)-\lambda_m$, then $\tLC{\C} < t$.
\end{lemma}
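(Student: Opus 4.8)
The statement is almost immediate from the double inequality in Eq. (5), so the proof is really just a matter of unpacking the two implications carefully. The plan is to treat the two assertions separately, each time invoking one side of Eq. (5) and doing a small floor-function manipulation.

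For the first assertion, I would start from the lower bound $\tLC{\C} \ge \lfloor (\dLC{\C}-1)/2 \rfloor$ in Eq. (5). Substituting $\dLC{\C} = 2t+1$ gives $\lfloor (2t+1-1)/2 \rfloor = \lfloor 2t/2 \rfloor = t$, hence $\tLC{\C} \ge t$. This is the routine direction; no obstacle here.

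For the second assertion, I would argue by contrapositive, or rather directly: suppose $\tLC{\C} \ge t$. The upper bound in Eq. (5) gives $t \le \tLC{\C} \le \lfloor (\dLC{\C} + \lambda_m)/2 \rfloor - 1$, so $\lfloor (\dLC{\C} + \lambda_m)/2 \rfloor \ge t+1$, which means $(\dLC{\C} + \lambda_m)/2 \ge t+1$ (the floor of a real number being at least an integer forces the number itself to be at least that integer), i.e. $\dLC{\C} \ge 2(t+1) - \lambda_m$. Taking the contrapositive, $\dLC{\C} < 2(t+1) - \lambda_m$ implies $\tLC{\C} < t$, as claimed.

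The only mildly delicate point — and the closest thing to an ``obstacle'' — is making sure the floor inequalities are handled in the correct direction: from $\lfloor x \rfloor \ge n$ (with $n$ an integer) one may conclude $x \ge n$, but the reverse passage requires care, and one should note that $\dLC{\C}$, $t$, and $\lambda_m$ are integers so that the arithmetic stays within $\NN$. Beyond that, the proof is a two-line computation in each case, and I would simply present it as such with a \qedhere at the end of the displayed manipulation.
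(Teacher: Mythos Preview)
Your proposal is correct and is exactly the approach the paper takes: the lemma is stated there as ``a direct consequence of \Cref{eq:bound_d_t}'' with no separate proof, and your write-up simply unpacks the two halves of that inequality via the floor manipulations you describe. There is nothing to add.
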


In the sequel, we directly bound $\tLC{\C}$ without detouring through the minimum distance.
For many parameter choices, these bounds improve upon those obtained via the distance-based argument; see \Cref{fig:t}.

\textbf{Packing and covering.} \ 
Upper and lower bounds on the cardinality of a code with error-correction capability $t$ follow from the sizes of the weighted-Hamming ball $\bLw(t)$ and its difference set $\Delta\bLw(t)$.
To compute these sizes, we enumerate the block weights attained in the respective sets, i.e., we let
\begin{align*}
\wN{\bLw(t)}&\coloneqq \left\{\wN{\6v}: \6v \in \bLw(t) \right\},\\
\wN{\Delta\bLw(t)}&\coloneqq\left\{\wN{\6v}: \6v \in \Delta\bLw(t) \right\}. 
\end{align*}
The cardinalities can be calculated as
\begin{align*}
\size{\bLw(t)} &= \sum_{\6i\in\wN{\bLw(t)}} \prod_{\ell\in[m]} \binom{n_\ell}{i_\ell} (q-1)^{i_\ell}, \\
\size{\Delta\bLw(t)} &= \sum_{\6i\in\wN{\Delta\bLw(t)}} \prod_{\ell\in[m]} \binom{n_\ell}{i_\ell} (q-1)^{i_\ell}.
\end{align*}

The following packing and covering bounds hold for general error sets~\cite{loeliger1994basic}; hence, no proof is given for the weighted-Hamming metric.

\begin{proposition}
Any code $\C\subseteq \F_q^n$ with $\tLC{\C} = t$ satisfies
\[
\kC{\C} \leq n - \log_q\size{\bLw(t)}.
\]
Moreover, there exists a code $\C$ such that $\tLC{\C} = t$ and 
\[
\kC{\C} \geq n - \log_q\size{\Delta\bLw(t)}.
\]
\end{proposition}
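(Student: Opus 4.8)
The plan is to treat this as two independent classical arguments—a sphere-packing (Hamming) bound and a Gilbert--Varshamov-style covering bound—both transported verbatim to the weighted-Hamming setting, since the only structure they use is that the relevant error set is a finite subset of $\F_q^n$ containing $\60$. For the upper bound, I would argue that if $\tLC{\C}=t$, then by the characterization $\code\cap\Delta\bLw(t)=\{\60\}$ one in particular has $\code\cap\bLw(t)=\{\60\}$ (note $\bLw(t)\subseteq\Delta\bLw(t)$ since $\6x=\6x-\60$ and $\60\in\bLw(t)$), and more to the point the translates $\6c+\bLw(\lfloor\cdot\rfloor)$ need not even be invoked—one simply observes that the map $\code\times\bLw(t)\to\F_q^n$, $(\6c,\6x)\mapsto\6c+\6x$ restricted suitably is injective. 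Concretely: if $\6c+\6x=\6c'+\6x'$ with $\6x,\6x'\in\bLw(t)$ then $\6c-\6c'=\6x'-\6x\in\Delta\bLw(t)$, forcing $\6c=\6c'$; hence the $q^k$ sets $\{\6c\}+\bLw(t)$ are pairwise disjoint, giving $q^k\cdot\size{\bLw(t)}\le q^n$, i.e.\ $k\le n-\log_q\size{\bLw(t)}$.

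For the lower (existence) bound, I would run the standard greedy construction on the difference set: pick codewords $\6c_1,\6c_2,\dots$ one at a time so that no $\6c_i-\6c_j$ lies in $\Delta\bLw(t)$; equivalently, having chosen a set $S$, a new point $\6c$ is forbidden exactly when $\6c\in S+\Delta\bLw(t)$. As long as $\size{S}\cdot\size{\Delta\bLw(t)}<q^n$ there is an admissible choice, so one can reach $\size{S}=\lceil q^n/\size{\Delta\bLw(t)}\rceil$. This produces a (not necessarily linear) set with pairwise differences avoiding $\Delta\bLw(t)$, hence with $\tLC{}\ge t$; to get a genuine $\F_q$-linear code of the claimed dimension one either invokes the standard linear-algebraic refinement of Gilbert--Varshamov (build a parity-check matrix column by column, or average over cosets), or simply cites \cite{loeliger1994basic} as the excerpt already signals. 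The dimension bound $k\ge n-\log_q\size{\Delta\bLw(t)}$ then follows by taking logarithms. One subtlety worth a sentence: to conclude $\tLC{\C}=t$ exactly (not just $\ge t$) for the extremal existence statement, one should note that the inequality is the substantive claim and any such code can be punctured/shortened or the parameter $t$ is understood as a lower guarantee; I would phrase the proposition's second half as "there exists a code with $\tLC{\C}\ge t$ and $k\ge n-\log_q\size{\Delta\bLw(t)}$" if strict equality causes friction.

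The main obstacle is essentially bookkeeping rather than mathematics: making sure the inclusion $\bLw(t)\subseteq\Delta\bLw(t)$ and the disjointness-of-translates step are stated cleanly, and—for the existence half—being honest about the gap between "a set with good pairwise differences" and "a linear code," which is exactly why the excerpt defers to \cite{loeliger1994basic}. Since the problem statement itself says "no proof is given," the pragmatic move is to not prove it at all but to spell out in one paragraph why the two classical bounds apply unchanged: the weighted-Hamming ball $\bLw(t)$ plays the role of the Hamming ball and $\Delta\bLw(t)$ the role of its difference set, and \cite{loeliger1994basic}'s general-error-set packing and covering bounds specialize directly. I would therefore present the above as a short remark-style justification, emphasizing the two injectivity/greedy arguments in a few lines each and citing \cite{loeliger1994basic} for the linear version, rather than writing a full self-contained proof.
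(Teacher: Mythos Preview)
Your proposal is correct and aligns with the paper's approach: the paper itself gives no proof, stating only that ``the following packing and covering bounds hold for general error sets~\cite{loeliger1994basic}; hence, no proof is given for the weighted-Hamming metric.'' Your sketch of the sphere-packing and Gilbert--Varshamov arguments, together with the deferral to \cite{loeliger1994basic} for the linear version, is exactly the intended justification.
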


\textbf{Singleton.} \ 
Let $\C$ be a $k$-dimensional code.
Then, $\C$ admits a generator matrix of the form $\6G = (\6A,\6T)$ where $\6T\in \F_q^{k\times k}$ is a  lower triangular matrix.
The first row of $\6G$ provides a codeword $\6c\in \C\setminus\{\60\}$ such that $\6c_i = 0$ for all $i > n-k+1$. 
It follows that $\tLC{\C} \leq \tL{\6c}$.
Combining this argument with \Cref{lem:sub_support}, we directly obtain the following Singleton-like bound for arbitrary $k$-dimensional codes.

\begin{proposition}\label{prop:t_singleton}
Let $\cstar \in \F_q^n$ be such that $\cstar_i \neq 0 \iff i \leq n-k+1$.  
Then, any $k$-dimensional code $\C \subseteq \F_q^n$ satisfies 
\[
\tLC{\C} \leq \tL{\cstar}.
\]
\end{proposition}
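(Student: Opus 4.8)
The plan is to make explicit the observation already sketched in the paragraph preceding the statement, then combine it with \Cref{lem:sub_support}. First I would recall that any $k$-dimensional code $\C$ has a generator matrix in the ``systematic-at-the-tail'' form $\6G = (\6A, \6T)$ with $\6T \in \F_q^{k\times k}$ lower triangular and invertible; this follows from standard Gaussian elimination, permuting columns so that the last $k$ columns are linearly independent and then clearing entries. The first row $\6c$ of $\6G$ is then a nonzero codeword whose only possibly-nonzero coordinate among the last $k$ positions is coordinate $n-k+1$ (the top-left corner of $\6T$), so $c_i = 0$ for all $i > n-k+1$.

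Next I would argue that $\wN{\6c} \preceq \wN{\cstar}$. Indeed, $\cstar$ has support exactly $\{1,\ldots,n-k+1\}$, which contains the support of $\6c$; since $\cstar$ is nonzero on every coordinate in its support while $\6c$ is zero outside $\{1,\ldots,n-k+1\}$, in each block $\ell$ the Hamming weight of $\6c_\ell$ is at most that of $\cstar_\ell$. Hence $\wN{\6c} \preceq \wN{\cstar}$, and \Cref{lem:sub_support} gives $\tL{\6c} \leq \tL{\cstar}$. Combined with the definition $\tLC{\C} = \min_{\6c'\in\C\setminus\{\60\}} \tL{\6c'} \leq \tL{\6c}$, we conclude $\tLC{\C} \leq \tL{\cstar}$, as claimed. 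A minor point to address cleanly is that $\tL{\cstar}$ depends only on $\wN{\cstar}$ (again by \Cref{lem:sub_support}), so the bound is well-defined even though $\cstar$ is not unique; any vector with the stated support pattern yields the same value.

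I do not expect a serious obstacle here — the statement is essentially a packaging of the preceding discussion. The only place requiring a little care is the passage from ``$\6T$ is lower triangular'' to ``$c_i = 0$ for $i > n-k+1$'': one must be sure the triangular block sits in the last $k$ coordinates and is oriented so that the first row contributes a single nonzero entry there, which is just a matter of fixing conventions (or equivalently, writing $\6T$ as upper triangular and taking the first row). After that, the support-containment argument and the invocation of \Cref{lem:sub_support} are immediate, so the proof is short.

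\begin{proof}
Let $\C$ be a $k$-dimensional code. By Gaussian elimination on its generator matrix (permuting coordinates so that the last $k$ are linearly independent and then reducing), $\C$ admits a generator matrix of the form $\6G = (\6A,\6T)$ with $\6T \in \F_q^{k\times k}$ invertible and lower triangular. Let $\6c$ be the first row of $\6G$. Then $\6c \in \C\setminus\{\60\}$ and, since the first row of $\6T$ has at most one nonzero entry (in its first position), $c_i = 0$ for every $i > n-k+1$.

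Consequently $\supp(\6c) \subseteq \{1,\ldots,n-k+1\} = \supp(\cstar)$, so for each block $\ell\in[m]$ we have $\wH{\6c_\ell} \leq \wH{\cstar_\ell}$, i.e.\ $\wN{\6c} \preceq \wN{\cstar}$. By \Cref{lem:sub_support}, this yields $\tL{\6c} \leq \tL{\cstar}$, and since $\tL{\cstar}$ is fully determined by $\wN{\cstar}$, the right-hand side does not depend on the particular choice of $\cstar$. Finally, using $\tLC{\C} = \min_{\6c'\in\C\setminus\{\60\}} \tL{\6c'}$ and that $\6c$ is a valid choice, we get
\[
\tLC{\C} \leq \tL{\6c} \leq \tL{\cstar},
\]
which is the claimed bound.
\end{proof}
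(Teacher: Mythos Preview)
Your approach is exactly the paper's: produce a nonzero codeword supported on $\{1,\ldots,n-k+1\}$ via a triangular generator matrix, then invoke \Cref{lem:sub_support}. One small but genuine slip: you justify the triangular form by \emph{permuting coordinates} so the last $k$ columns are independent, making $\6T$ invertible. In the weighted-Hamming setting coordinate permutations are not innocuous---they move entries between blocks with different $\lambda_\ell$, so the codeword you find after permuting need not have support in $\{1,\ldots,n-k+1\}$ of the \emph{original} space, and the comparison with $\cstar$ breaks down.

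The fix is immediate: drop ``invertible'' and the permutation. Row operations alone always yield $\6G=(\6A,\6T)$ with $\6T$ lower triangular (possibly singular): for each $i$, the subcode of $\C$ vanishing on the last $k-i$ coordinates has dimension at least $i$, so one can choose a basis adapted to this filtration. Equivalently, just observe that the projection of $\C$ onto the last $k-1$ coordinates has kernel of dimension $\ge 1$, which already gives the needed codeword $\6c$ with $c_i=0$ for $i>n-k+1$. After this correction your proof matches the paper's verbatim.
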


\Cref{prop:t_singleton} implies that Maximum Distance Separable (MDS) codes are optimal with respect to the wHm error-correction capability.
This aligns with~\cite{bitzer2024weighted}, which shows that MDS codes also achieve the optimal wHm minimum distance.

\textbf{Linear programming.} \ 
Simonis derived in~\cite{simonis}  MacWilliams identities for the block-weight enumerator
\[
A_{\6i}(\C) = \size{\{\6c\in\C: \wN{\6c}=\6i\}},
\]
which in turn yield a Linear Programming (LP) bound.
Since the weighted-Hamming weight is coarser than the block weight, an LP bound for the weighted-Hamming metric follows; see \cite{bitzer2024weighted}.
By modifying the constraints and imposing $A_{\6i} = 0$ for all $\6i\in\wN{\Delta\bLw(t)}\setminus\{\60\}$, we obtain an refined LP bound for the error-correction capability.

\begin{proposition}
Denote as $K_{j_\ell}^{\mathrm{H}}(i_\ell)$ the Hamming-metric Krawtchouk coefficient for the $\ell$-th block, which is given as
\[
K_{j_\ell}^\mathrm{H}(i_\ell)=\sum_{s=0}^{j_\ell} \binom{n_\ell-i_\ell}{{j_\ell}-s}\binom{i_\ell}{s}(q-1)^{{j_\ell}-s}(-1)^s.
\]
Then, any code $\C$ with error-correction capability $\tLC{\C} = t$ satisfies $\kC{\C} \leq \sum_{\6w} A_{\6w}$ for $\sum_{\6w} A_{\6w}$ maximum under the constraints
\begin{equation*}
\begin{aligned}
A_{\6{0}} &=1, \\
A_{\6i} &\geq 0 & \forall \6i &\in  \wN{\F_q^n},  \\
A_{\6i} &=0 & \forall \6i &\in \wN{\Delta\bLw(t)}\setminus\{\60\}, \\
\sum_{\6i} \prod_{\ell\in[m]}    K_{j_\ell}^{\textnormal{H}}(i_\ell)A_{\6i} &\geq 0 & \forall \6j &\in \wN{\F_q^n}. 
\end{aligned}
\end{equation*}    
\end{proposition}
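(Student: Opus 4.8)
The plan is to derive this LP bound by following the standard Delsarte-style linear programming argument adapted to the block-weight enumerator, and then imposing the additional vanishing constraints coming from the error-correction hypothesis. First I would recall Simonis's MacWilliams identities \cite{simonis} for the block-weight distribution: if $A_{\6i}(\C)$ counts codewords of block weight $\6i$ and $A_{\6j}(\C^\perp)$ likewise for the dual code, then $A_{\6j}(\C^\perp) = \frac{1}{|\C|}\sum_{\6i} \bigl(\prod_{\ell\in[m]} K_{j_\ell}^{\mathrm{H}}(i_\ell)\bigr) A_{\6i}(\C)$, where the product of per-block Krawtchouk coefficients arises because the block-weight enumerator of a product-structured ambient space factors as a tensor product of the individual Hamming-metric transforms. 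Since $A_{\6j}(\C^\perp)\ge 0$ for every $\6j\in\wN{\F_q^n}$, and $A_{\60}(\C)=1$, $A_{\6i}(\C)\ge 0$ always, the true block-weight distribution of $\C$ is a feasible point of the stated LP (after dividing through by $|\C|$, which only rescales and does not affect the sign constraints). Hence $|\C| = \sum_{\6i} A_{\6i}(\C)$ is at most the optimum value, giving $\kC{\C}=\log_q|\C|\le\log_q\bigl(\max\sum_{\6w}A_{\6w}\bigr)$; I would note the statement as written equates $\kC{\C}$ with the maximum of $\sum_{\6w}A_{\6w}$, so one should read the right-hand side as already log-transformed, or equivalently normalize so that $A_{\60}=1$ forces the LP value to be $|\C|$ and take logarithms at the end.

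The new ingredient is the constraint $A_{\6i}=0$ for all $\6i\in\wN{\Delta\bLw(t)}\setminus\{\60\}$. I would justify this from the characterization recalled earlier in the paper: $\tLC{\C}=t$ implies $\C\cap\Delta\bLw(t)=\{\60\}$, so no nonzero codeword $\6c$ lies in $\Delta\bLw(t)$. If some nonzero $\6c\in\C$ had block weight $\6i\in\wN{\Delta\bLw(t)}$, then by definition of $\wN{\Delta\bLw(t)}$ there is a vector $\6v\in\Delta\bLw(t)$ with $\wN{\6v}=\6i=\wN{\6c}$; but membership in $\Delta\bLw(t)$ depends only on the block weight (it is a union of ``block-weight shells'', since $\wL{\cdot}$ and the ball $\bLw(t)$ are block-weight determined by \Cref{lem:sub_support}, and the difference set inherits this), so $\6c\in\Delta\bLw(t)$ as well, contradiction. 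Therefore $A_{\6i}(\C)=0$ for every such $\6i$, and the true distribution still satisfies the tightened LP; the bound follows exactly as above.

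The main obstacle I expect is the careful bookkeeping around the claim that $\Delta\bLw(t)$ is a union of block-weight shells — i.e., that $\6v\in\Delta\bLw(t)$ and $\wN{\6w}=\wN{\6v}$ together imply $\6w\in\Delta\bLw(t)$. This needs a short argument: writing $\6v=\6x-\6y$ with $\6x,\6y\in\bLw(t)$, one must exhibit $\6x',\6y'\in\bLw(t)$ with $\6x'-\6y'=\6w$; the natural approach is to coordinate-wise transport the support pattern of $\6x$ and $\6y$ along a block-weight-preserving permutation sending $\supp(\6v)$ to $\supp(\6w)$ within each block, then adjust nonzero field values so the difference is exactly $\6w$ — using that the Hamming ball and weighted ball are invariant under coordinate permutations within blocks and under rescaling of individual coordinates. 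Once this shell-invariance is nailed down, everything else is the routine Delsarte argument plus the observation that Simonis's identities already supply the Krawtchouk kernel in the required product form, so I would keep those parts brief and cite \cite{simonis,bitzer2024weighted,loeliger1994basic} rather than reproving them.
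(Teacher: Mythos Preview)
Your proposal is correct and follows exactly the route the paper takes; note that the paper gives no separate proof for this proposition, only the preceding paragraph, which invokes Simonis's block-weight MacWilliams identities and then imposes the extra vanishing constraint coming from $\C\cap\Delta\bLw(t)=\{\60\}$. The shell-invariance of $\Delta\bLw(t)$ that you flag as the main obstacle need not be argued via permutations and rescalings: since $\6v\in\Delta\bLw(t)\iff \min_{\6r}\max\{\wL{\6r},\wL{\6v-\6r}\}\le t\iff \tL{\6v}<t$, and \Cref{lem:sub_support} already states that $\tL{\6v}$ is fully determined by $\wN{\6v}$, the invariance is immediate.
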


\textbf{Comparison.} \ 
\Cref{fig:t} illustrates upper and lower bounds on the maximum dimension of a code for varying error-correction capability $t$.
We set $m=2$ with $\6n = (7,7)$ and $\bLam = (1,2)$. 
The proposed bounds are compared against their distance-based counterparts from~\cite{bitzer2024weighted}, which rely on \Cref{lem:d_vs_t}.
For both $q = 2$ and $q=7$, the plot confirms that the proposed bounds are at least as tight as these counterparts, and for several parameter settings, they provide improvements.

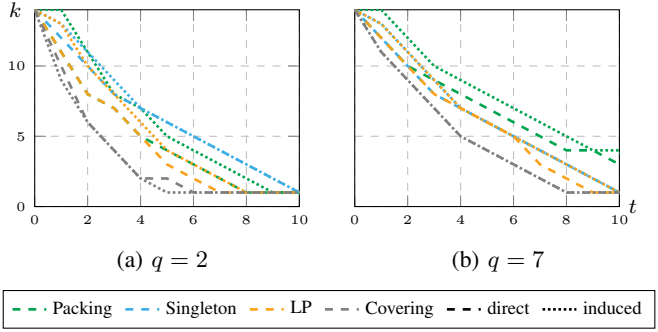
\begin{figure}[t]
\centering%
\begin{subfigure}[b]{0.5\columnwidth}
\centering
\begin{tikzpicture}[baseline] 
\begin{axis}[%
ymax = 14,
ymin = 0,
xmin = 0,
xmax = 10,
label style={font=\footnotesize},
ymajorgrids,
xmajorgrids,
grid style=dashed,
legend style={at={(1,1)},anchor=north east,font=\scriptsize,legend cell align=left, align=left, draw=black, legend image post style={xscale=0.75,yscale = 0.6,mark size=5pt}, },
ylabel near ticks,
xlabel near ticks,
width = 5.1cm,
height = 4.2cm,
legend columns=6,
legend entries={Packing\ , Singleton\ , LP\ , Covering\ , direct\ , induced},
legend to name={t_legend},
every axis y label/.style={font=\footnotesize,at={(current axis.north west)},above =0cm,left = 0.5mm},
ylabel={$k$},
ticklabel style = {font=\tiny},
]

\addplot [SPstyle] 
  table[row sep=crcr]{
0 14\\ 
1 11\\ 
2 8\\ 
3 7\\ 
4 5\\ 
5 4\\ 
6 3\\ 
7 2\\ 
8 1\\ 
9 1\\ 
10 1\\ 
11 0\\ 
};

\addplot [naiveSPstyle, forget plot] 
  table[row sep=crcr]{
0 14\\ 
1 14\\ 
2 11\\ 
3 8\\ 
4 7\\ 
5 5\\ 
6 4\\ 
7 3\\ 
8 2\\ 
9 1\\ 
10 1\\ 
11 1\\ 
12 0\\ 
};

\addplot [Singletonstyle] 
  table[row sep=crcr]{
0 14\\ 
1 12\\ 
2 10\\ 
3 8\\ 
4 7\\ 
5 6\\ 
6 5\\ 
7 4\\ 
8 3\\ 
9 2\\ 
10 1\\ 
11 0\\ 
};

\addplot [naiveSingletonstyle, forget plot] 
  table[row sep=crcr]{
0 14\\ 
1 13\\ 
2 11\\ 
3 9\\ 
4 7\\ 
5 6\\ 
6 5\\ 
7 4\\ 
8 3\\ 
9 2\\ 
10 1\\ 
11 0\\ 
};

\addplot [LPstyle] 
  table[row sep=crcr]{
0 14\\ 
1 11\\ 
2 8\\ 
3 7\\ 
4 5\\ 
5 3\\ 
6 2\\ 
7 1\\ 
8 1\\ 
9 1\\ 
10 1\\ 
11 0\\ 
};

\addplot [naiveLPstyle, forget plot] 
  table[row sep=crcr]{
0 14\\ 
1 13\\ 
2 10\\ 
3 8\\ 
4 6\\ 
5 4\\ 
6 3\\ 
7 2\\ 
8 1\\ 
9 1\\ 
10 1\\ 
11 0\\ 
};

\addplot [GVstyle] 
  table[row sep=crcr]{
0 14\\ 
1 10\\ 
2 6\\ 
3 4\\ 
4 2\\ 
5 2\\ 
6 1\\ 
7 1\\ 
8 1\\ 
9 1\\ 
10 1\\ 
11 0\\ 
};

\addplot [naiveGVstyle, forget plot] 
  table[row sep=crcr]{
0 14\\ 
1 9\\ 
2 6\\ 
3 4\\ 
4 2\\ 
5 1\\ 
6 1\\ 
7 1\\ 
8 1\\ 
9 1\\ 
10 1\\ 
11 0\\ 
};

\addplot [GVstyle, black] %
  table[row sep=crcr]{
-1 -1\\
};

\addplot [naiveGVstyle, black] %
  table[row sep=crcr]{
-1 -1\\
};

\end{axis}
\end{tikzpicture}
\caption{$q=2$}
\label{fig:t_q2}
\end{subfigure}
\begin{subfigure}[b]{0.48\columnwidth}
\centering
\begin{tikzpicture}[baseline] 
\begin{axis}[%
ymax = 14,
ymin = 0,
xmin = 0,
xmax = 10,
label style={font=\footnotesize},
xlabel={$t$},
ymajorgrids,
xmajorgrids,
grid style=dashed,
legend style={at={(1,1)},anchor=north east,font=\scriptsize,legend cell align=left, align=left, draw=black},
xlabel near ticks,
width = 5.1cm,
height = 4.2cm,
ymajorticks=false,
ticklabel style = {font=\tiny},
every axis x label/.style={font=\footnotesize,at={(current axis.south east)},below =0cm,right = 0mm},
]

\addplot [SPstyle] 
  table[row sep=crcr]{
0 14\\ 
1 12\\ 
2 10\\ 
3 9\\ 
4 8\\ 
5 7\\ 
6 6\\ 
7 5\\ 
8 4\\ 
9 4\\ 
10 3\\ 
11 2\\ 
12 2\\ 
13 1\\ 
14 1\\ 
15 1\\ 
16 0\\ 
};

\addplot [naiveSPstyle, forget plot] 
  table[row sep=crcr]{
0 14\\ 
1 14\\ 
2 12\\ 
3 10\\ 
4 9\\ 
5 8\\ 
6 7\\ 
7 6\\ 
8 5\\ 
9 4\\ 
10 4\\ 
11 0\\ 
};

\addplot [Singletonstyle, dash phase = 1.6pt]
  table[row sep=crcr]{
0 14\\ 
1 12\\ 
2 10\\ 
3 8\\ 
4 7\\ 
5 6\\ 
6 5\\ 
7 4\\ 
8 3\\ 
9 2\\ 
10 1\\ 
11 0\\ 
};

\addplot [naiveSingletonstyle,forget plot, dash phase = 0.5pt]
  table[row sep=crcr]{
0 14\\ 
1 13\\ 
2 11\\ 
3 9\\ 
4 7\\ 
5 6\\ 
6 5\\ 
7 4\\ 
8 3\\ 
9 2\\ 
10 1\\ 
11 0\\ 
};

\addplot [ LPstyle]
  table[row sep=crcr]{
0 14\\ 
1 12\\ 
2 10\\ 
3 8\\ 
4 7\\ 
5 6\\ 
6 5\\ 
7 3\\ 
8 2\\ 
9 1\\ 
10 1\\ 
};

\addplot [ naiveLPstyle, forget plot]
  table[row sep=crcr]{
0 14\\ 
1 13\\ 
2 11\\ 
3 9\\ 
4 7\\ 
5 6\\ 
6 5\\ 
7 4\\ 
8 3\\ 
9 2\\ 
10 1\\ 
11 0\\ 
};

\addplot [GVstyle] 
  table[row sep=crcr]{
0 14\\ 
1 11\\ 
2 9\\ 
3 7\\ 
4 5\\ 
5 4\\ 
6 3\\ 
7 2\\ 
8 1\\ 
9 1\\ 
10 1\\ 
11 0\\ 
};

\addplot [naiveGVstyle, forget plot] 
  table[row sep=crcr]{
0 14\\ 
1 11\\ 
2 9\\ 
3 7\\ 
4 5\\ 
5 4\\ 
6 3\\ 
7 2\\ 
8 1\\ 
9 1\\ 
10 1\\ 
11 0\\ 
};

\end{axis}
\end{tikzpicture}
\caption{$q=7$}         
\label{fig:t_q7}
\end{subfigure}\hfill
\vspace{0.2cm}
\pgfplotslegendfromname{t_legend}
\caption{Bounds on the maximum dimension $k$ of a linear code with error-correcion capability $t$. 
Block lengths $\6n=(7,\,7)$ and scaling factors $\bm{\lambda} = (1,\,2)$.}
\label{fig:t}
\end{figure}

\section{Code Construction}\label{sec:codes}

This section proposes a flexible code construction for the weighted-Hamming metric.
Generalized Concatenated Codes (GCC), originally introduced by Zinov'ev~\cite{zinov1976generalized}, have been successfully applied to a wide range of error models~\cite{zinoviev2021generalized,thiers2022code,bitzer2025bounds}.
Building on this framework, we consider a generalization that uses polyalphabetic outer codes and distinct inner codes for each block (similar to \cite{dettmar1995modified}).
The resulting flexibility allows the construction to be tailored to given scaling coefficients $\lambda_1,\ldots,\lambda_m$. Moreover, we show that a modified version of the original decoder achieves decoding up to a lower bound on the true error-correction capability.

\subsection{Polyalphabetic Codes}

Classical Hamming-metric codes are defined over $\F_q^n$, i.e., all coordinates take values in the same field.
In contrast, polyalphabetic (or mixed) codes are defined over the ambient space $\AmbSpace = \mathcal{Q}_1\dirprod\dotsb\dirprod\mathcal{Q}_n$ for $n$ alphabets $\mathcal{Q}_1,\ldots,\mathcal{Q}_n$; see~\cite{sidorenko2005polyalphabetic}. 

In this work, we use $\mathcal{Q}_\ell = \F_{q}^{\extdeg_\ell}$ for $\ell\in[n]$.
More specifically, we use polyalphabetic codes $\codeA$ that are $\F_q$-linear subspaces of $\AmbSpace = \F_{q}^{\extdeg_1}\dirprod\dotsb\dirprod\F_{q}^{\extdeg_n}$, equipped with the Hamming distance, where the elements of $\F_q^{\extdeg_i}$ are used as symbols.
As in the monoalphabetic case, we denote by $n = \nC{\codeA}$ the length, by $\kC{\codeA}$ the $\F_q$-dimension, and by 
\[
\dHC{\codeA} = \min_{\6a\in\codeA\setminus\{\60\}} \size{\{i\in[n]:\6a_i\in \F_q^{\extdeg_i}\setminus\{\60\}\}}
\]
the minimum Hamming distance.

The following construction recalls the derivation of polyalphabetic codes from monoalphabetic mother codes due to \cite{sidorenko2005polyalphabetic}.

\begin{construction}\label{const:poly}
We construct $\codeA \subset \AmbSpace = \F^{\extdeg_1}_{q} \dirprod \dotsb \dirprod \F^{\extdeg_n}_{q}$.
Without loss of generality, we assume that $\extdeg_1 \leq \ldots \leq \extdeg_n$.
Let $\C \subseteq \F^n_{q^{\extdeg_k}}$ have $\F_{q^{\extdeg_k}}$-dimension $k$ equipped with a systematic encoder, i.e., $\C.\enc{\6m}_{[k]} = \6m$.
Denote by $\Gamma$ an $\F_q$-linear mapping from $\F_{q^{\extdeg_k}}$ to $\F_{q}^{\extdeg_k}$.
We write \[
\Gamma(\C) = \{(\Gamma(c_1),\ldots,\Gamma(c_n)): (c_1,\ldots,c_n)\in C\}.
\]
$\Gamma(\C)$ contains a subcode $\Gamma(\C)'$ such that all $(\6c_1,\ldots\6c_n)\in\Gamma(\C)'$ satisfy
$\6c_i \in \F_q^{\extdeg_i}\dirprod\{0\}^{\extdeg_k-\extdeg_i}$ for any $i\in[k]$.
Then $\codeA$ is obtained as
\[
\codeA \coloneqq \left\{ (\6a_1,\ldots,\6a_n)\in\AmbSpace:\hspace{0.1cm}
\begin{aligned}
&(\6c_1,\ldots,\6c_n) \in \Gamma(\C)', \\
&\6c_i = (\6a_i, \60)\text{ for }i \leq k, \\
&\6a_i = (\6c_i,\60)\text{ for } i > k
\end{aligned}
\right\}.
\]
To embed $\Gamma(\C)'$ into $\AmbSpace$, the zero-coordinates are punctured in the first $k$ positions and zero padding is applied to the remaining $n-k$ positions.
Then $\codeA\in\AmbSpace$ has $\F_q$-dimension $\kC{\codeA} = \sum_{\ell\in[k]} \extdeg_\ell$ and satisfies $\dHC{\codeA} \geq \dHC{\C}$.
\end{construction}

\begin{example}\label{ex:poly}
    Let $\C \subset \F_{q^2}^3$ be the parity-check code, i.e., $\C$ has $\F_{q^2}$-dimension $\kC{\C} = 2$ and 
    $\dHC{\C} = 2$.
    Applying \Cref{const:poly}, we obtain a code $\codeA\subset \F_q \dirprod \F^2_{q}\dirprod\F^3_{q}$ with $\F_q$-dimension $\kC{\codeA} = 3$ and minimum Hamming distance $\dHC{\codeA} = 2$. 
\end{example}

For further details on polyalphabetic codes, we refer the interested reader to \cite{sidorenko2005polyalphabetic,yehezkeally2023bounds} and the references therein. 

\subsection{Generalized Concatenated Codes}

Classical GCCs combine several outer codes with a single nested sequence of inner codes; see \cite{bossert1999channel} for a comprehensive introduction.
To construct GCCs for the weighted-Hamming metric with block sizes $(n_1,\,\ldots,\,n_m)$ and scaling coefficients $(\lambda_1,\,\ldots,\,\lambda_m)$, we consider, for each block $\ell\in[m]$, a nested sequences of inner codes 
\[
\{\60\} = \codeB_{s+1, \ell} \subseteq  \codeB_{s, \ell}\subseteq \dotsb \subseteq \codeB_{1, \ell} \subseteq \F_q^{n_\ell},
\]
where $s \leq m$ is the number of levels.
By abuse of notation, $\codeB_{j,\ell}/\codeB_{j+1,\ell} \subset \F_q^{n_\ell}$ denotes the (not necessarily unique) linear code of dimension $\kC{\codeB_{j,\ell}}-\kC{\codeB_{j+1,\ell}}$ such that
\[
\codeB_{j,\ell} = \{\6c_1 + \6c_2 \mid \6c_1 \in\codeB_{j+1,\ell}, \6c_2 \in \codeB_{j,\ell}/\codeB_{j+1,\ell} \}. 
\]

Before using this notation to introduce the proposed code construction, we illustrate it with a simple example.
\begin{example}
Let $\codeB_{j,\ell} = \F_2^3$ be generated by 
$\6g_1 = \begin{pmatrix}
1 & 1 & 1\\
\end{pmatrix}$,
$\6g_2 = \begin{pmatrix}
0 & 1 & 0\\
\end{pmatrix}$, and
$\6g_3 = \begin{pmatrix}
0 & 0 & 1
\end{pmatrix}$.
We select the subcode $\codeB_{j+1,\ell} = \langle\6g_1\rangle$ which is the repetition code.
Then, we may select $\codeB_{j,\ell}/\codeB_{j+1,\ell} = \langle \6g_2,\6g_3\rangle$.
\end{example}

\begin{construction}\label{const:gcc} 
For $\ell \in[m]$, let a nested sequence of inner codes $\{\60\} = \codeB_{s+1, \ell} \subseteq  \codeB_{s, \ell}\subseteq \dotsb \subseteq \codeB_{1, \ell} \subseteq \F_q^{n_\ell}$ be given.
For $j \in [s]$, let $\codeA_j \subseteq \F_{q}^{\extdeg_{j,1}} \times \cdots \times \F_{q}^{\extdeg_{j,m}}$ with $\extdeg_{j,\ell} = \kC{\codeB_{j,\ell}}-\kC{\codeB_{j+1,\ell}}$.
Each $\codeA_j$ is $\F_q$-linear with $\F_q$-dimension $\kC{\codeA_j}$ and Hamming distance $\dHC{\codeA_j}$.
The \emph{generalized concatenated code} $\C$ is constructed as {\small
\[
\C = \left\{\!\Big( \sum_{j\in[s]} (\codeB_{j,\ell} / \codeB_{j+1,\ell} ).\enc{a_{j,\ell}}\Big)_{\ell\in[m]}\mid \6a_j \in \codeA_j\ \forall j \in [s]\right\}.
\]}Then, $\C$ has length $\nC{\C} = \sum_{\ell\in[m]} n_\ell$, dimension $\kC{\C} = \sum_{j\in[s]} \kC{\codeA_j}$, and minimum wHm distance 
\[
\dLC{\C} \geq \LBdLC{\C} \coloneqq \min_{j\in [s]} \sum_{\ell\in[\dHC{\codeA_j}]} \lambda_{(\ell)}\cdot \dHC{\codeB_{j,(\ell)}},
\]
where $\lambda_{(1)}\cdot \dHC{\codeB_{j,(1)}} \leq \dots \leq \lambda_{(m)}\cdot \dHC{\codeB_{j,(m)}}$ denotes the ordered sequence of weighted inner-code distances at level~$j$.
\end{construction}

\begin{proof}
The statements on the length and dimension follow directly from classical arguments.
Let $\6a_j \in\codeA_j$ denote the outer codeword encoded at level $j\in[s]$.
Let $\jstar$ be such that $\6a_j = \60$ for $j< \jstar$ and that $\6a_{\jstar} \neq \60$; let $\6a_j$ be arbitrary for $j > \jstar$. 
Since $\codeA_j$ has minimum Hamming distance $\dHC{\codeA_{\jstar}}$, $\6a_{\jstar}$ has at least $\dHC{\codeA_{\jstar}}$ non-zero coefficients. By the $\F_q$-linearity of the encoding, each such symbol results in a non-zero codeword of $\codeB_{\ell,\jstar}$.
As $\codeB_{\jstar,\ell}$ has minimum Hamming distance $\dHC{\codeB_{\jstar,\ell}}$, each non-zero codeword contributes at least $\lambda_\ell \cdot \dHC{\codeB_{\jstar,\ell}}$ to the weighted-Hamming weight.
%
Ordering the quantities $\lambda_\ell \cdot \dHC{\codeB_{j^\star,\ell}}$  as $\lambda_{(1)} \cdot \dHC{\codeB_{j^\star,(1)}} \le \dots \le \lambda_{(m)} \cdot \dHC{\codeB_{j^\star,(m)}}$, the total contribution of the $\dHC{\codeA_{j^\star}}$ non-zero positions is lower bounded by
\[
\sum_{\ell\in[\dHC{\codeA_{\jstar}}]}
\lambda_{(\ell)} \, \dHC{\codeB_{\jstar,(\ell)}}.
\]
Minimizing over $\jstar\in[s]$ yields the stated lower bound.
\end{proof}

For illustration, we first consider the case of $s=1$, which corresponds to a concatenated code.
If, in addition, the outer code is chosen to have rate one, the construction further specializes to independent coding across the blocks.

\begin{example}
Let $\6n = (3,3,3)$ and $\bLam = (1,2,3)$.
We choose $s = 1$ and an outer code $\codeA_1\subset \F_q\times \F^2_{q}\times\F^3_{q}$ constructed as in $\Cref{ex:poly}$, with $\kC{\codeA_1} = 3$ and $\dHC{\codeA_1} = 2$.
As inner codes, we take $\codeB_{1,1}, \codeB_{1,2}, \codeB_{1,3} \subset \F_q^3$ to be a repetition code, a single parity-check code, and a rate-one code, respectively.
The constructed GCC $\C$ has dimension $\kC{\C} = 3$ and minimum weighted Hamming distance $\dLC{\C} \geq 3+3 = 6$, because the weighted inner-code distances are
\[
\lambda_1\cdot\dHC{\codeB_{1,1}} = 3, \quad
\lambda_2\cdot\dHC{\codeB_{1,2}} = 4, \quad
\lambda_3\cdot\dHC{\codeB_{1,3}} = 3. 
\]
\end{example}


A special case of \Cref{const:gcc} 
is our previous construction~\cite{bitzer2024weighted}, which achieves $\LBdLC{\C} = 5$ for $\bLam = (1,2)$.

\begin{remark}\label{rem:old_code}
Let $m=2$ and $\bLam=(1,2)$.
To construct a GCC $\C$ with $\dLC{\C}= 5$, we pick $s=2$ and inner codes such that
\begin{align*}
\codeB_{2,1} &\subseteq \codeB_{1,1} \subseteq \F_q^{n_1},
& \dHC{\codeB_{2,1}} &= 5,
& \dHC{\codeB_{1,1}} &= 3,\\
\codeB_{2,2} &\subseteq \codeB_{1,2} = \F_q^{n_2},
& \dHC{\codeB_{2,2}} &= 3,
& \dHC{\codeB_{1,2}} &= 1.
\end{align*}
Let $\extdeg_{j,\ell} = \kC{\codeB_{j,\ell}}-\kC{\codeB_{j+1,\ell}}$ for $\ell\in[2]$ and $j\in[s]$.
We select outer codes
\begin{align*}
\codeA_1 &\subseteq \F^{\extdeg_{1,1}}_{q} \times \F^{\extdeg_{1,2}}_{q},
& \dHC{\codeA_1} &= 2,\\
\codeA_2 &= \F^{\extdeg_{2,1}}_{q} \times \F^{\extdeg_{2,2}}_{q},
& \dHC{\codeA_2} &= 1.
\end{align*}
The resulting generalized concatenated code has dimension 
\[
\min\{\extdeg_{1,1}, \extdeg_{1,2}\} + \kC{\codeB_{2,1}} + \kC{\codeB_{2,2}}
\]
and minimum weighted-Hamming distance equal to~$5$.
\end{remark}

One can check that the codes constructed in \Cref{rem:old_code} have error-correction capability $\tLC{\C} = 2$, which corresponds to the lowest possible value according to \Cref{lem:d_vs_t}.
In general, however, the error-correction capability of the codes in \Cref{const:gcc} \textit{can exceed} half the minimum distance.
The following theorem provides a lower bound on the true error-correction capability.

\begin{theorem}\label{thm:gcc_tau_conc}%
Let $\C$ be obtained from \Cref{const:gcc} with inner codes $\codeB_{j,\ell}$ and outer codes $\codeA_j$ for $j\in[s]$ and $\ell\in [m]$. 
For each $j\in[s]$ and $\ell\in[m]$, fix a vector $\6v_{j,\ell}\in\F_q^{n_\ell}$ with $\wH{\6v_{j,\ell}} = \dHC{\codeB_{j,\ell}}$.
Define $\Vset = \bigcup_{j\in[s]}\Vset_j$ with
\[
\Vset_j \coloneqq
\left\{
(u_1 \6v_{\!j,1}, \ldots, u_m \6v_{\!j,m})
\middle|
\6u \in \{0,1\}^m\!\!,\wH{\6u} = \dHC{\codeA_j}
\right\}.
\]
Then, the error-correction capability of $\C$ satisfies
\[
\tLC{\C} \geq \LBtLC{\C} \coloneqq \min_{\6v\in \Vset} \tL{\6v}.
\]
\end{theorem}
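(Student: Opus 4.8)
The plan is to reduce the statement about $\tLC{\C}$ to the pointwise analysis of a single ``worst-case'' nonzero codeword, exactly as in the proof of the distance bound in \Cref{const:gcc}, but now tracking the block weight $\wN{\cdot}$ rather than only the weighted weight $\wL{\cdot}$. Recall from the definition of $\tLC{\C}$ that $\tLC{\C} = \min_{\6c\in\C\setminus\{\60\}} \tL{\6c}$, so it suffices to show that for every nonzero $\6c\in\C$ there is some $\6v\in\Vset$ with $\tL{\6c}\geq\tL{\6v}$. By \Cref{lem:sub_support}, both $\wL{\cdot}$ and $\tL{\cdot}$ depend only on the block weight and are monotone with respect to $\preceq$; hence it is enough to exhibit, for each nonzero $\6c\in\C$, a vector $\6v\in\Vset$ with $\wN{\6v}\preceq\wN{\6c}$.

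First I would fix a nonzero codeword $\6c=(\6c_1,\ldots,\6c_m)\in\C$ and let $\jstar$ be the smallest level at which the outer codeword is nonzero, so $\6a_j=\60$ for $j<\jstar$ and $\6a_{\jstar}\neq\60$, exactly as in the proof of \Cref{const:gcc}. Since $\codeA_{\jstar}$ has minimum Hamming distance $\dHC{\codeA_{\jstar}}$, the outer codeword $\6a_{\jstar}$ has a support $S\subseteq[m]$ of size at least $\dHC{\codeA_{\jstar}}$; for each $\ell\in S$ the inner contribution $(\codeB_{\jstar,\ell}/\codeB_{\jstar+1,\ell}).\enc{a_{\jstar,\ell}}$ is a nonzero codeword of $\codeB_{\jstar,\ell}$ (here one uses that $\codeB_{\jstar+1,\ell}\subseteq\codeB_{\jstar,\ell}$, so adding the contributions of higher levels $j>\jstar$ stays inside $\codeB_{\jstar,\ell}$ and cannot become zero), and therefore $\wH{\6c_\ell}\geq\dHC{\codeB_{\jstar,\ell}}$. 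Choosing $\6u\in\{0,1\}^m$ supported on any $\dHC{\codeA_{\jstar}}$-subset of $S$ and setting $\6v=(u_1\6v_{\jstar,1},\ldots,u_m\6v_{\jstar,m})\in\Vset_{\jstar}\subseteq\Vset$, we get $\wH{(\6v)_\ell}=u_\ell\dHC{\codeB_{\jstar,\ell}}\leq\wH{\6c_\ell}$ for all $\ell$ (for $\ell\notin\supp(\6u)$ the left side is zero, for $\ell\in\supp(\6u)\subseteq S$ it is $\dHC{\codeB_{\jstar,\ell}}\leq\wH{\6c_\ell}$), i.e.\ $\wN{\6v}\preceq\wN{\6c}$. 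Then \Cref{lem:sub_support} gives $\tL{\6v}\leq\tL{\6c}$, and taking the minimum over $\6c$ and then bounding by the minimum over all of $\Vset$ yields $\tLC{\C}\geq\min_{\6v\in\Vset}\tL{\6v}=\LBtLC{\C}$.

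I expect the main subtlety, rather than a genuine obstacle, to be the careful justification that for each $\ell$ in the outer support at level $\jstar$ the block $\6c_\ell$ really does have Hamming weight at least $\dHC{\codeB_{\jstar,\ell}}$: one must argue that the level-$\jstar$ contribution is a \emph{nonzero} codeword of $\codeB_{\jstar,\ell}$ and that summing the (a priori unconstrained) contributions of levels $j>\jstar$, which lie in $\codeB_{j,\ell}\subseteq\codeB_{\jstar+1,\ell}\subseteq\codeB_{\jstar,\ell}$, keeps $\6c_\ell$ a nonzero element of $\codeB_{\jstar,\ell}$ and hence of weight at least $\dHC{\codeB_{\jstar,\ell}}$. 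This is precisely the nesting argument already used in \Cref{const:gcc}; the only new ingredient here is that we retain the blockwise weight inequalities instead of collapsing them into a single sum, and then invoke the monotonicity in \Cref{lem:sub_support} to pass from $\wN{\6v}\preceq\wN{\6c}$ to $\tL{\6v}\leq\tL{\6c}$. A minor point worth a sentence is that $\Vset$ is nonempty and the minimum is well defined, since each $\codeA_j$ is a nonzero code so $\dHC{\codeA_j}\leq m$ and an appropriate $\6u$ exists.
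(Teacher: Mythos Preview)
Your proposal is correct and follows essentially the same approach as the paper's proof: fix a nonzero codeword, identify the first level $\jstar$ with nonzero outer word, use the nesting of inner codes to exhibit $\6v\in\Vset_{\jstar}$ with $\wN{\6v}\preceq\wN{\6c}$, and then apply \Cref{lem:sub_support}. Your write-up is in fact more detailed than the paper's on the nesting/non-cancellation point, which the paper leaves implicit.
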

\begin{proof}
Let $\6c \in \C\setminus\{\60\}$.
By \Cref{const:gcc}, $\6c$ is of the form 
\[
\6c = \Big( \sum_{j\in[s]} (\codeB_{j,\ell} / \codeB_{j+1,\ell} ).\enc{\6a_{j,\ell}}\Big)_{\ell\in[m]}.
\]
Let $\6a_j = \60$ for all $j<\jstar$ and $\6a_{\jstar}\neq \60$.
Then $\6c$ has at least $\dHC{\codeA_{\jstar}}$ blocks with non-zero codewords from the inner codes $\codeB_{\jstar,\ell}$.    
Consequently there exists a vector $\6v\in\Vset_{\jstar}$ such that
$\wN{\6v} \preceq \wN{\6c}$.
By \Cref{lem:sub_support}, this implies $\tL{\6v} \leq \tL{\6c}$, and the claim follows.
\end{proof}

\Cref{thm:gcc_tau_conc} requires evaluating $\tL{\6v}$ for at most $s\cdot2^m$ vectors $\6v$.
The obtained lower bound on the error-correction capability is non-trivial in the sense that there exist concatenated codes for which the lower bound exceeds half the true minimum distance.
This is illustrated in the following example.

\begin{example}\label{ex:tau}
Let $\6n = (3,3)$ and $\bLam = (1, 2)$.
$\C$ is obtained by concatenating $\codeA_1 = \F_2 \times \F^3_{2}$ with a repetition code $\codeB_{1,1}\subset\F_2^3$ and a rate-one code $\codeB_{1,2}\subset\F_2^3$, so that $\dHC{\codeB_{1,1}} = 3$ and $\dHC{\codeB_{1,2}} = 1$.
Then $\C$ has minimum weighted-Hamming distance $\dHC{\C} = 2$, while Thm.~\ref{thm:gcc_tau_conc} gives $\tLC{\C} \geq \LBtLC{\C} = 1$.
\end{example}

\subsection{Weighted-Hamming Metric Decoder}

Classical GCCs are decoded up to half the minimum distance using the Blokh-Zyablov-Zinov'ev algorithm~\cite{zinov1981generalized}; see also \cite{bossert1988decoding,bossert1999channel} for a detailed treatment.
For \Cref{const:gcc}, each block $\ell\in[m]$ has its own scaling coefficient $\lambda_\ell$ and inner codes $\codeB_{j,\ell}$.
Accordingly, we adjust the reliabilities passed from the inner to the outer decoder following \cite{dettmar1995modified}.
\Cref{alg:dec} summarizes the resulting decoder.

\begin{algorithm}[h]
\caption{Decoding Algorithm $\C.\dec{\cdot}$}\label{alg:dec}


\begin{algorithmic}
\Statex 
\Statex \makebox[\widthof{\textbf{Output:}}]{\textbf{Input:}\hfill} noisy sequence $\6r = \6c + \6e = (\6r_1,\ldots,\6r_m)$
\Statex \textbf{Output:} estimated codeword $\hat{\6c} = \sum_{j\in[s]} \hat{\6c}_j$
\Statex \rule{\linewidth}{0.5pt}
\State $\hat{\6c} \gets \60$, $\6r_1 \gets \6r$
\For{$j\in[s]$}
    \For{$\ell\in[m]$}
        \State ${\6b}'_{j,\ell} \gets \codeB_{j,\ell}.\dec{\6r_{j,\ell}}$ \hfill \textit{// Hamming metric decoder}
        \vspace{0.2cm}
        \If{$\6b'_{j,\ell} = {\perp}$}\hfill\textit{// decoding failure}
            \State ${\6a}'_{j,\ell} \gets \60$, $\alpha_{j,\ell} \gets 0$\hfill\textit{// outer symbol,  reliability}
        \Else
            \State $\6a'_{j,\ell} \gets (\codeB_{j,\ell} / \codeB_{j+1,\ell} ).\encinv{\6b'_{j,\ell}}$
            \State $\alpha_\ell \gets \lambda_\ell \cdot \max\{0,\dHC{\codeB_{j,\ell}} - 2 d_\mathrm{H}(\6r_{j,\ell}, \6b'_{j,\ell})\}$
        \EndIf
    \EndFor
    \vspace{0.2cm}
    \State $\hat{\6a}_j \gets \codeA_j.\GMDdec{\6a', \bm\alpha}$\hfill\textit{// GMD decoder} 
    \State$\hat{\6c}_j \gets \left( (\codeB_{\ell,j} / \codeB_{\ell,j+1} ).\enc{\hat{\6a}_{j,\ell}}\right)_{\ell\in[m]}$ \hfill\textit{// re-encode} 
    \State$\hat{\6c} \gets \hat{\6c} + \hat{\6c}_j$, $\6r_{j+1} \gets \6r_j - \hat{\6c}_j$  \hfill\textit{// successive cancellation} 
\EndFor
\State \Return $\hat{\6c}$
\end{algorithmic}
\end{algorithm}

To decode $\6r = \6c + \6e$, the decoder proceeds level by level in a successive-cancellation-like manner.
Let $\6r_j = (\6r_{j,1},\ldots,\6r_{j,m})$ denote the residual after removing contributions from the previously decoded levels $1,\ldots,j-1$.
At level $j$,  for each $\ell\in[m]$, decoding $\codeB_{j,\ell}$ produces an estimated inner codeword ${\6b}'_{j,\ell}$.
If a decoding output is available (${\6b}'_{j,\ell} \neq {\perp}$), the corresponding outer symbol is obtained as an $\6a'_{j,\ell} \in \F^{\extdeg_{j,\ell}}_{q}$ such that there exists ${\6b}'_{j+1,\ell} \in \codeB_{j+1,\ell}$ with
\[
(\codeB_{j,\ell}/\codeB_{j+1,\ell}).\enc{a'_\ell} + {\6b}'_{j+1,\ell} =  {\6b}'_{j,\ell}.
\]
In \Cref{alg:dec}, this is denoted as $(\codeB_{\ell,j} / \codeB_{\ell,j+1} ).\encinv{{\6b}'_\ell}$.
The symbol reliability is then computed as 
\[
\alpha_\ell = \lambda_\ell \cdot \max\{0,\,\dHC{\codeB_{j,\ell}}-2\dH{\6r_{j,\ell}}{{\6b}'_{j,\ell}}\}.
\]
If decoding fails (${\6b}'_\ell = {\perp}$), we set $a_\ell = 0$  and $\alpha_\ell = 0$. 
The estimates $(a'_{j,\ell}, \alpha_{j,\ell})_{\ell\in[m]}$ are passed to a Generalized Minimum Distance (GMD) decoder for the corresponding outer code~\cite{forney1966generalized}.
\noindent The result is then re-encoded and canceled from $\6r_j$ before proceeding to the next level.

\begin{theorem}
\label{thm:dec}
Let $\C$ be constructed according to \Cref{const:gcc}.
For $j\in[s]$ and $\ell\in[m]$, assume that $\codeB_{j,\ell}$ is decoded using a Hamming-metric bounded-minimum-distance decoder and that~$\codeA_j$ is decoded using a GMD decoder~\cite{forney1966generalized}. 
Then \Cref{alg:dec} correctly decodes $\C$ for all errors up to $\LBtLC{\C}$.
\end{theorem}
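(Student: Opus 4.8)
The plan is to argue by induction on the level index $j\in[s]$, proving that \Cref{alg:dec} decodes level $j$ correctly whenever $\wL{\6e}\le\LBtLC{\C}$. Write the transmitted codeword as $\6c=\sum_{j'\in[s]}\6c_{j'}$, with level contributions $\6c_{j'}=\big((\codeB_{j',\ell}/\codeB_{j'+1,\ell}).\enc{a_{j',\ell}}\big)_{\ell\in[m]}$ and outer codewords $\6a_{j'}\in\codeA_{j'}$. Assuming $\hat{\6c}_{j'}=\6c_{j'}$ for all $j'<j$ (the base case $j=1$ being vacuous), successive cancellation makes the residual entering level $j$ equal to $\6r_j=\sum_{j'\ge j}\6c_{j'}+\6e$, so in each block $\ell$ we have $\6r_{j,\ell}=\6b_{j,\ell}+\6e_\ell$ with $\6b_{j,\ell}:=\sum_{j'\ge j}(\codeB_{j',\ell}/\codeB_{j'+1,\ell}).\enc{a_{j',\ell}}\in\codeB_{j,\ell}$, a vector that reduces to $(\codeB_{j,\ell}/\codeB_{j+1,\ell}).\enc{a_{j,\ell}}$ modulo $\codeB_{j+1,\ell}$. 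It therefore suffices to show that $\codeA_j.\GMDdec{\6a',\bm\alpha}$ returns $\6a_j$: then re-encoding gives $\hat{\6c}_j=\6c_j$ and the induction continues.

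First I would analyze the inner decodings at level $j$. Put $\theta_\ell=\wH{\6e_\ell}$, so that $\wL{\6e}=\sum_{\ell\in[m]}\lambda_\ell\theta_\ell\le\LBtLC{\C}$, and abbreviate $w_\ell:=\lambda_\ell\dHC{\codeB_{j,\ell}}$. Call block $\ell$ \emph{good} if $2\theta_\ell<\dHC{\codeB_{j,\ell}}$; then the bounded-minimum-distance decoder of $\codeB_{j,\ell}$ on $\6r_{j,\ell}$ returns $\6b'_{j,\ell}=\6b_{j,\ell}$, the hard decision $a'_{j,\ell}=a_{j,\ell}$ is correct, and $\alpha_\ell=w_\ell-2\lambda_\ell\theta_\ell$. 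Otherwise call $\ell$ \emph{bad}: either $\6b'_{j,\ell}={\perp}$ and $\alpha_\ell=0$, or the decoder miscorrects, in which case $\dHC{\codeB_{j,\ell}}\le\dH{\6b_{j,\ell}}{\6b'_{j,\ell}}\le\theta_\ell+\dH{\6r_{j,\ell}}{\6b'_{j,\ell}}$ yields $\alpha_\ell\le\lambda_\ell(2\theta_\ell-\dHC{\codeB_{j,\ell}})$; either way $0\le\alpha_\ell\le\lambda_\ell\max\{0,\,2\theta_\ell-\dHC{\codeB_{j,\ell}}\}$. A convenient consequence is that the generalized-distance contribution of outer coordinate $\ell$ toward $\6a_j$ — namely $w_\ell-\alpha_\ell$ if $a'_{j,\ell}=a_{j,\ell}$ and $w_\ell+\alpha_\ell$ otherwise — is always at most $2\lambda_\ell\theta_\ell$, with equality on good blocks.

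Next I would bound the set $B$ of bad blocks. Suppose $|B|\ge\dHC{\codeA_j}$; pick $S\subseteq B$ with $|S|=\dHC{\codeA_j}$ and let $\6v\in\Vset_j$ be the associated vector (so $\wH{\6v_\ell}=\dHC{\codeB_{j,\ell}}$ for $\ell\in S$ and $\6v_\ell=\60$ for $\ell\notin S$). Each $\ell\in S$ has $\theta_\ell\ge\lceil\dHC{\codeB_{j,\ell}}/2\rceil$, while choosing the split $\rho_\ell=\lfloor\dHC{\codeB_{j,\ell}}/2\rfloor$ shows $\tL{\6v}+1\le\sum_{\ell\in S}\lambda_\ell\lceil\dHC{\codeB_{j,\ell}}/2\rceil\le\sum_{\ell\in S}\lambda_\ell\theta_\ell\le\wL{\6e}$; this contradicts $\wL{\6e}\le\LBtLC{\C}\le\tL{\6v}$. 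Hence $|B|\le\dHC{\codeA_j}-1$; in particular, at most $\dHC{\codeA_j}-1$ outer coordinates reach the GMD decoder with a wrong hard decision, and each of those has $\alpha_\ell\le\lambda_\ell(2\theta_\ell-\dHC{\codeB_{j,\ell}})$, so the wHm-weight $\lambda_\ell\theta_\ell$ that $\6e$ spends there is at least $\tfrac12(w_\ell+\alpha_\ell)$.

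The hard part will be the final step: showing that the GMD decoder of $\codeA_j$ — run with the weighted erasure-and-error decoding adapted to the weights $w_\ell$, as in \cite{forney1966generalized,dettmar1995modified} — outputs $\6a_j$ under the sole hypothesis $\wL{\6e}\le\LBtLC{\C}$. Intuitively this should hold because any rival outer codeword differs from $\6a_j$ in at least $\dHC{\codeA_j}$ blocks, whence the corresponding difference of inner codewords block-weight-dominates some $\6v\in\Vset_j$, and by \Cref{lem:sub_support} no error of weight $\le\tL{\6v}$ can make that rival look preferable; the work is to extract this from the GMD mechanics. The obstacle is that $\LBtLC{\C}$ may exceed half of $\LBdLC{\C}$, so one cannot simply invoke the Forney bound ``generalized distance $\le 2\wL{\6e}<\LBdLC{\C}$''; instead one must scan the erasure thresholds, and the right bookkeeping — trading the $\le\dHC{\codeA_j}-1$ wrong hard decisions (the ``expensive'' blocks, where $\6e$ spends at least $\tfrac12(w_\ell+\alpha_\ell)$) against the erasures of good-but-weak blocks — mirrors exactly the minimization over splits in the definition of $\tL{\6v}$ applied to the worst pattern $\6v\in\Vset_j$. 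Chaining the inductive steps over $j\in[s]$ then gives the theorem.
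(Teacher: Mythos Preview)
Your setup (induction on the level, analysis of inner decodings, the dichotomy good/bad) is correct, and your intermediate bound $|B|\le\dHC{\codeA_j}-1$ is valid. But the proof has a genuine gap exactly where you flag it: you do not actually prove that the GMD decoder of $\codeA_j$ returns $\6a_j$. Knowing that at most $\dHC{\codeA_j}-1$ hard decisions are wrong is \emph{not} sufficient for GMD success, and your Forney-type observation that the generalized-distance contribution of block $\ell$ is at most $2\lambda_\ell\theta_\ell$ only yields the bound ``generalized distance to $\6a_j$ is $\le 2\wL{\6e}$'', which --- as you yourself note --- is too weak because $2\LBtLC{\C}$ can exceed $\LBdLC{\C}$. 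The vague reference to ``scanning the erasure thresholds'' and ``bookkeeping that mirrors the minimization over splits'' is not a proof; this is precisely the step that carries all the content.

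The missing idea is a sharper sufficient condition for GMD success due to Taipale--Pursley: with $\Eset=\{\ell:\6b'_{j,\ell}\neq\6b_{j,\ell}\}$, $f=|\Eset|$, and $\Gsetstar$ the $\max\{0,\dHC{\codeA_j}-f\}$ least reliable \emph{correctly} decoded blocks, GMD outputs $\6a_j$ whenever $\sum_{\ell\in\Gsetstar}\alpha_\ell>\sum_{\ell\in\Eset}\alpha_\ell$, regardless of the chosen reliabilities. The paper then argues by contradiction: assume this inequality fails, set $\Uset=\Eset\cup\Gsetstar$ (so $|\Uset|\ge\dHC{\codeA_j}$), and build $\6v$ with $\wH{\6v_\ell}=\dHC{\codeB_{j,\ell}}$ and $\6v_\ell$ agreeing with $\6e_\ell$ on its support for $\ell\in\Uset$, $\6v_\ell=\60$ otherwise. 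A short computation using the definition of $\alpha_\ell$ gives $\dL{\6e_\ell}{\6v_\ell}\le\lambda_\ell\theta_\ell-\alpha_\ell$ for $\ell\in\Eset$ and $\dL{\6e_\ell}{\6v_\ell}\le\lambda_\ell\theta_\ell+\alpha_\ell$ for $\ell\in\Gsetstar$; summing and using the violated inequality yields $\dL{\6e}{\6v}\le\wL{\6e}$. But $\6v$ block-weight-dominates some element of $\Vset_j$, so $\tL{\6v}\ge\LBtLC{\C}\ge\wL{\6e}$, which forces $\dL{\6e}{\6v}>\wL{\6e}$ --- a contradiction. Your proposal contains none of this; in particular, it never invokes the Taipale criterion and never constructs the comparison vector on $\Eset\cup\Gsetstar$.
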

\begin{proof}
Fix an error $\6e$ with $\wL{\e} \leq \LBtLC{\C}$ and an arbitrary level $j\in[s]$.
We show that the GMD decoder for $\codeA_j$ outputs the correct outer codeword.

Let $\wN{\6e} = (w_1,\ldots,w_m)$ denote the block weights of $\6e$.
Without loss of generality, we may assume that $w_\ell \leq \dHC{\codeB_{j,\ell}}$, since for any error $\6e$ violating this condition, there exists an error of smaller weight that yields the same decoding outcome.
Let $\Eset = \{\ell \mid \6b_{j,\ell} \neq \6b'_{j,\ell}\}$, $\Gset = [m]\setminus\Eset$, and $\size{\Eset} = f$.
Let $\Gsetstar \subset \Gset$ be the $\max\{0,\dHC{\codeA_j}-f\}$ least reliable correctly decoded blocks.
By \cite{taipale1991improvement}, GMD decoding succeeds if
\begin{equation}\label{eq:Taipale}
\sum_{\ell\in\Gsetstar} \alpha_\ell > \sum_{\ell\in\Eset} \alpha_\ell,
\end{equation}
independently of the chosen reliability measure.
Let us define 
\[
\Vset'_{\!j} = \left\{\6v\in\F_q^n\colon
\size{\{\ell\in[m]\colon \wH{\6v_{\!\ell}} \geq \dHC{\codeB_{j,\ell}}\}} \geq \dHC{\codeA_j} \right\}.
\]
Compare $\Vset'_j$ with $\Vset_j$ defined in Thm.~\ref{thm:gcc_tau_conc}:  for all $\6e$ with $\wL{\6e} \leq \LBtLC{\C}$ and all $\6v\in\Vset'_j$, it holds that $\wL{\6e} < \dL{\6e}{\6v}$. 
Now, assume that \Cref{eq:Taipale} is violated.
We construct $\6v \in \Vset'_j$ such that $\wL{\6e} \geq \dL{\6e}{\6v}$, implying $\wL{\6e} > \LBtLC{\C}$. 

Let $\Uset = \Eset\cup \Gsetstar$, which implies $\size{\Uset} \geq \dHC{\codeA_j}$.
For $\ell \in \Uset$, we pick $\6v_\ell$ such that $\wH{\6v_\ell} = \dHC{\codeB_{j,\ell}}$ and $v_{\ell, i} = e_{\ell,i}$ for all $i$ with $\6e_{\ell,i} \neq 0$.
For $\ell \notin \Uset$, we select $\6v_\ell = \60$.
Then,  $\6v \in\Vset'_j$ and, using   the definition of the reliabilities $\alpha_\ell$, we obtain
\[
\dL{\6v_\ell}{\6e_\ell} = \lambda_\ell (\dHC{\codeB_{j,\ell}} - w_\ell) \leq 
\begin{cases}
\lambda_\ell w_\ell -\alpha_\ell & \text{for } \ell \in \Eset,\\
\lambda_\ell w_\ell +\alpha_\ell & \text{for }\ell \in \Gsetstar.
\end{cases}
\vspace*{0.13in} 
\]

Summing over all blocks yields 
\begin{align*}
\dL{\6e}{\6v} 
&\leq \sum_{\ell \in \Gsetstar} (\alpha_\ell + w_\ell \lambda_\ell) 
+ \sum_{\ell \in \Eset} (\lambda_\ell w_\ell - \alpha_\ell)
+ \sum_{\ell \notin \Uset} \lambda_\ell w_\ell\\
&= \wL{\6e}  +\sum_{\ell \in \Gsetstar} \alpha_\ell - \sum_{\ell \in \Eset} \alpha_\ell \leq \wL{\6e}.
\end{align*}
Thus any error violating \Cref{eq:Taipale} satisfies $\wL{\6e} > \LBtLC{\C}$.
\end{proof}

\subsection{Achievable Parameters}

\begin{figure}[t]
\centering%
\begin{subfigure}[b]{0.5\columnwidth}
\centering
\begin{tikzpicture}[baseline] 
\begin{axis}[%
ymax = 21,
ymin = 0,
xmin = 0,
xmax = 42,
label style={font=\footnotesize},
ymajorgrids,
xmajorgrids,
grid style=dashed,
legend style={at={(1,1)},anchor=north east,font=\scriptsize,legend cell align=left, align=left, draw=black, legend image post style={xscale=0.75,yscale = 0.6,mark size=4pt}, },
ylabel near ticks,
xlabel near ticks,
width = 5.1cm,
legend columns=6,
legend entries={Packing\ , Singleton\ , LP\ , Covering\ , \Cref{const:gcc}},
legend to name={d_GCC_m3_legend},
every axis y label/.style={font=\footnotesize,at={(current axis.north west)},above =1mm,left = 0.5mm},
ylabel={$k$},
ticklabel style = {font=\tiny},
]

\addplot [SPstyle] 
  table[row sep=crcr]{
1 21\\ 
2 21\\ 
3 18\\ 
4 18\\ 
5 15\\ 
6 15\\ 
7 14\\ 
8 14\\ 
9 12\\ 
10 12\\ 
11 11\\ 
12 11\\ 
13 9\\ 
14 9\\ 
15 8\\ 
16 8\\ 
17 7\\ 
18 7\\ 
19 6\\ 
20 6\\ 
21 5\\ 
22 5\\ 
23 5\\ 
24 5\\ 
25 4\\ 
26 4\\ 
27 3\\ 
28 3\\ 
29 3\\ 
30 3\\ 
31 2\\ 
32 2\\ 
33 2\\ 
34 2\\ 
35 2\\ 
36 2\\ 
37 1\\ 
38 1\\ 
39 1\\ 
40 1\\ 
41 1\\ 
42 1\\ 
};

\addplot [Singletonstyle] 
  table[row sep=crcr]{
1 21\\ 
2 20\\ 
3 19\\ 
4 18\\ 
5 17\\ 
6 16\\ 
7 15\\ 
8 14\\ 
9 14\\ 
10 13\\ 
11 13\\ 
12 12\\ 
13 12\\ 
14 11\\ 
15 11\\ 
16 10\\ 
17 10\\ 
18 9\\ 
19 9\\ 
20 8\\ 
21 8\\ 
22 7\\ 
23 7\\ 
24 7\\ 
25 6\\ 
26 6\\ 
27 6\\ 
28 5\\ 
29 5\\ 
30 5\\ 
31 4\\ 
32 4\\ 
33 4\\ 
34 3\\ 
35 3\\ 
36 3\\ 
37 2\\ 
38 2\\ 
39 2\\ 
40 1\\ 
41 1\\ 
42 1\\ 
};

\addplot [LPstyle] 
  table[row sep=crcr]{
1 21\\ 
2 20\\ 
3 18\\ 
4 17\\ 
5 15\\ 
6 15\\ 
7 14\\ 
8 13\\ 
9 12\\ 
10 11\\ 
11 10\\ 
12 10\\ 
13 9\\ 
14 8\\ 
15 7\\ 
16 7\\ 
17 6\\ 
18 6\\ 
19 5\\ 
20 5\\ 
21 4\\ 
22 4\\ 
23 3\\ 
24 3\\ 
25 2\\ 
26 2\\ 
27 2\\ 
28 2\\ 
29 1\\ 
30 1\\ 
31 1\\ 
32 1\\ 
33 1\\ 
34 1\\ 
35 1\\ 
36 1\\ 
37 1\\ 
38 1\\ 
39 1\\ 
40 1\\ 
41 1\\ 
42 1\\ 
};

\addplot [GVstyle] 
  table[row sep=crcr]{
1 21\\ 
2 18\\ 
3 16\\ 
4 15\\ 
5 13\\ 
6 12\\ 
7 10\\ 
8 9\\ 
9 8\\ 
10 7\\ 
11 6\\ 
12 6\\ 
13 5\\ 
14 4\\ 
15 4\\ 
16 3\\ 
17 3\\ 
18 3\\ 
19 2\\ 
20 2\\ 
21 2\\ 
22 1\\ 
23 1\\ 
24 1\\ 
25 1\\ 
26 1\\ 
27 1\\ 
28 1\\ 
29 1\\ 
30 1\\ 
31 1\\ 
32 1\\ 
33 1\\ 
34 1\\ 
35 1\\ 
36 1\\ 
37 1\\ 
38 1\\ 
39 1\\ 
40 1\\ 
41 1\\ 
42 1\\ 
};

\addplot[newCstyle]
  table[row sep=crcr]{
1 21\\ 
2 20\\ 
3 18\\ 
4 17\\ 
5 15\\ 
6 14\\ 
7 13\\ 
8 12\\ 
10 10\\ 
12 9\\ 
14 7\\ 
16 6\\ 
18 5\\ 
21 4\\ 
24 3\\ 
26 2\\ 
42 1\\ 
}; 

\end{axis}
\end{tikzpicture}
\end{subfigure}
\begin{subfigure}[b]{0.48\columnwidth}
\centering
\begin{tikzpicture}[baseline] 
\begin{axis}[%
ymax = 21,
ymin = 0,
xmin = 1,
xmax = 42,
label style={font=\footnotesize},
xlabel={$d$},
ymajorgrids,
xmajorgrids,
grid style=dashed,
legend style={at={(1,1)},anchor=north east,font=\scriptsize,legend cell align=left, align=left, draw=black},
xlabel near ticks,
width = 5.1cm,
ymajorticks=false,
ticklabel style = {font=\tiny},
every axis x label/.style={font=\footnotesize,at={(current axis.south east)},below =0cm,right = 0mm},
]

\addplot [SPstyle] 
  table[row sep=crcr]{
1 21\\ 
2 21\\ 
3 19\\ 
4 19\\ 
5 17\\ 
6 17\\ 
7 16\\ 
8 16\\ 
9 15\\ 
10 15\\ 
11 14\\ 
12 14\\ 
13 13\\ 
14 13\\ 
15 12\\ 
16 12\\ 
17 11\\ 
18 11\\ 
19 10\\ 
20 10\\ 
21 10\\ 
22 10\\ 
23 9\\ 
24 9\\ 
25 8\\ 
26 8\\ 
27 8\\ 
28 8\\ 
29 7\\ 
30 7\\ 
31 7\\ 
32 7\\ 
33 6\\ 
34 6\\ 
35 6\\ 
36 6\\ 
37 5\\ 
38 5\\ 
39 5\\ 
40 5\\ 
41 4\\ 
42 4\\ 
};

\addplot [Singletonstyle] 
  table[row sep=crcr]{
1 21\\ 
2 20\\ 
3 19\\ 
4 18\\ 
5 17\\ 
6 16\\ 
7 15\\ 
8 14\\ 
9 14\\ 
10 13\\ 
11 13\\ 
12 12\\ 
13 12\\ 
14 11\\ 
15 11\\ 
16 10\\ 
17 10\\ 
18 9\\ 
19 9\\ 
20 8\\ 
21 8\\ 
22 7\\ 
23 7\\ 
24 7\\ 
25 6\\ 
26 6\\ 
27 6\\ 
28 5\\ 
29 5\\ 
30 5\\ 
31 4\\ 
32 4\\ 
33 4\\ 
34 3\\ 
35 3\\ 
36 3\\ 
37 2\\ 
38 2\\ 
39 2\\ 
40 1\\ 
41 1\\ 
42 1\\ 
};

\addplot [LPstyle] 
  table[row sep=crcr]{
1 21\\ 
2 20\\ 
3 19\\ 
4 18\\ 
5 17\\ 
6 16\\ 
7 15\\ 
8 14\\ 
9 14\\ 
10 13\\ 
11 13\\ 
12 12\\ 
13 12\\ 
14 11\\ 
15 10\\ 
16 10\\ 
17 9\\ 
18 9\\ 
19 8\\ 
20 8\\ 
21 8\\ 
22 7\\ 
23 7\\ 
24 6\\ 
25 6\\ 
26 6\\ 
27 5\\ 
28 5\\ 
29 4\\ 
30 4\\ 
31 4\\ 
32 3\\ 
33 3\\ 
34 3\\ 
35 2\\ 
36 2\\ 
37 1\\ 
38 1\\ 
39 1\\ 
40 1\\ 
41 1\\ 
42 1\\ 
};

\addplot [GVstyle] 
  table[row sep=crcr]{
1 21\\ 
2 20\\ 
3 18\\ 
4 17\\ 
5 16\\ 
6 15\\ 
7 14\\ 
8 13\\ 
9 12\\ 
10 11\\ 
11 11\\ 
12 10\\ 
13 9\\ 
14 9\\ 
15 8\\ 
16 8\\ 
17 7\\ 
18 7\\ 
19 6\\ 
20 6\\ 
21 5\\ 
22 5\\ 
23 4\\ 
24 4\\ 
25 4\\ 
26 3\\ 
27 3\\ 
28 3\\ 
29 2\\ 
30 2\\ 
31 2\\ 
32 2\\ 
33 1\\ 
34 1\\ 
35 1\\ 
36 1\\ 
37 1\\ 
38 1\\ 
39 1\\ 
40 1\\ 
41 1\\ 
42 1\\ 
};

\addplot[newCstyle]
  table[row sep=crcr]{
1 21\\ 
2 20\\ 
3 19\\ 
4 18\\ 
5 17\\ 
6 15\\ 
7 14\\ 
9 13\\ 
11 12\\ 
12 11\\ 
13 10\\ 
15 9\\ 
18 8\\ 
21 6\\ 
22 5\\ 
26 4\\ 
31 3\\ 
36 2\\ 
42 1\\ 
}; 

\end{axis}
\end{tikzpicture}
\end{subfigure}\hfill

\centering
\begin{subfigure}[b]{0.5\columnwidth}
\centering
\begin{tikzpicture}[baseline] 
\begin{axis}[%
ymax = 21,
ymin = 0,
xmin = 0,
xmax = 20,
label style={font=\footnotesize},
ymajorgrids,
xmajorgrids,
grid style=dashed,
legend style={at={(1,1)},anchor=north east,font=\scriptsize,legend cell align=left, align=left, draw=black, legend image post style={xscale=0.75,yscale = 0.6,mark size=5pt}, },
ylabel near ticks,
xlabel near ticks,
width = 5.1cm,
legend columns=6,
legend entries={Packing\ , Singleton\ , LP\ , Covering\ , \Cref{const:gcc}},
legend to name={t_GCC_m3_legend},
every axis y label/.style={font=\footnotesize,at={(current axis.north west)},above =1mm,left = 0.5mm},
ylabel={$k$},
ticklabel style = {font=\tiny},
]

\addplot [SPstyle] 
  table[row sep=crcr]{
0 21\\ 
1 18\\ 
2 15\\ 
3 14\\ 
4 12\\ 
5 11\\ 
6 9\\ 
7 8\\ 
8 7\\ 
9 6\\ 
10 5\\ 
11 5\\ 
12 4\\ 
13 3\\ 
14 3\\ 
15 2\\ 
16 2\\ 
17 2\\ 
18 1\\ 
19 1\\ 
20 1\\ 
};

\addplot [Singletonstyle] 
  table[row sep=crcr]{
0 21\\ 
1 19\\ 
2 17\\ 
3 15\\ 
4 14\\ 
5 13\\ 
6 12\\ 
7 11\\ 
8 10\\ 
9 9\\ 
10 8\\ 
11 7\\ 
12 6\\ 
13 6\\ 
14 5\\ 
15 4\\ 
16 4\\ 
17 3\\ 
18 2\\ 
19 2\\ 
20 1\\ 
21 0\\ 
};

\addplot [LPstyle] 
  table[row sep=crcr]{
0 21\\ 
1 18\\ 
2 15\\ 
3 14\\ 
4 12\\ 
5 10\\ 
6 9\\ 
7 7\\ 
8 6\\ 
9 5\\ 
10 4\\ 
11 3\\ 
12 2\\ 
13 2\\ 
14 1\\ 
15 1\\ 
16 1\\ 
17 1\\ 
18 1\\ 
19 1\\ 
20 1\\ 
};

\addplot [GVstyle] 
  table[row sep=crcr]{
0 21\\ 
1 17\\ 
2 13\\ 
3 10\\ 
4 8\\ 
5 6\\ 
6 5\\ 
7 4\\ 
8 3\\ 
9 2\\ 
10 2\\ 
11 1\\ 
12 1\\ 
13 1\\ 
14 1\\ 
15 1\\ 
16 1\\ 
17 1\\ 
18 1\\ 
19 1\\ 
20 1\\ 
};

\addplot[newCstyle]
  table[row sep=crcr]{
0 21\\
1 18\\ 
2 15\\ 
3 13\\ 
4 10\\ 
5 9\\ 
6 7\\ 
7 6\\ 
8 5\\ 
10 4\\ 
11 3\\ 
12 2\\ 
20 1\\ 
}; 

\end{axis}
\end{tikzpicture}
\caption{$q=2$}
\label{fig:m3_q2}
\end{subfigure}
\begin{subfigure}[b]{0.48\columnwidth}
\centering
\begin{tikzpicture}[baseline] 
\begin{axis}[%
ymax = 21,
ymin = 0,
xmin = 0,
xmax = 20,
label style={font=\footnotesize},
xlabel={$t$},
ymajorgrids,
xmajorgrids,
grid style=dashed,
legend style={at={(1,1)},anchor=north east,font=\scriptsize,legend cell align=left, align=left, draw=black},
xlabel near ticks,
width = 5.1cm,
ymajorticks=false,
ticklabel style = {font=\tiny},
every axis x label/.style={font=\footnotesize,at={(current axis.south east)},below =0cm,right = 0mm},
]

\addplot [SPstyle] 
  table[row sep=crcr]{
0 21\\ 
1 19\\ 
2 17\\ 
3 16\\ 
4 15\\ 
5 14\\ 
6 13\\ 
7 12\\ 
8 11\\ 
9 10\\ 
10 10\\ 
11 9\\ 
12 8\\ 
13 8\\ 
14 7\\ 
15 7\\ 
16 6\\ 
17 6\\ 
18 5\\ 
19 5\\ 
20 4\\ 
21 4\\ 
22 3\\ 
23 3\\ 
24 3\\ 
25 2\\ 
26 2\\ 
27 2\\ 
28 1\\ 
29 1\\ 
30 1\\ 
31 1\\ 
};

\addplot [Singletonstyle] 
  table[row sep=crcr]{
0 21\\ 
1 19\\ 
2 17\\ 
3 15\\ 
4 14\\ 
5 13\\ 
6 12\\ 
7 11\\ 
8 10\\ 
9 9\\ 
10 8\\ 
11 7\\ 
12 6\\ 
13 6\\ 
14 5\\ 
15 4\\ 
16 4\\ 
17 3\\ 
18 2\\ 
19 2\\ 
20 1\\ 
21 0\\ 
};

\addplot [LPstyle] 
  table[row sep=crcr]{
0 21\\ 
1 19\\ 
2 17\\ 
3 15\\ 
4 14\\ 
5 13\\ 
6 12\\ 
7 10\\ 
8 9\\ 
9 8\\ 
10 8\\ 
11 7\\ 
12 6\\ 
13 5\\ 
14 4\\ 
15 4\\ 
16 3\\ 
17 2\\ 
18 1\\ 
19 1\\ 
20 1\\ 
};

\addplot [GVstyle] 
  table[row sep=crcr]{
0 21\\ 
1 18\\ 
2 16\\ 
3 14\\ 
4 12\\ 
5 11\\ 
6 9\\ 
7 8\\ 
8 7\\ 
9 6\\ 
10 5\\ 
11 4\\ 
12 4\\ 
13 3\\ 
14 2\\ 
15 2\\ 
16 1\\ 
17 1\\ 
18 1\\ 
19 1\\ 
20 1\\ 
};

\addplot[newCstyle]
  table[row sep=crcr]{
0 21\\
1 19\\ 
2 17\\ 
3 14\\ 
4 13\\ 
5 12\\ 
6 10\\ 
7 9\\ 
8 8\\ 
10 6\\ 
11 5\\ 
12 4\\ 
15 3\\ 
17 2\\ 
20 1\\ 
}; 

\end{axis}
\end{tikzpicture}
\caption{$q=7$}         
\label{fig:m3_q7}
\end{subfigure}\hfill
\vspace{0.2cm}
\pgfplotslegendfromname{t_GCC_m3_legend}
\caption{Dimension $k$ of \Cref{const:gcc} for given minimum distance $d$ (upper row) or given error-correction capability $t$ (lower row). 
Block lengths $\6n=(7,\,7,\,7)$ and scaling factors $\bm{\lambda} = (1,\,2,\,3)$.}
\label{fig:GCC_m3}
\end{figure}
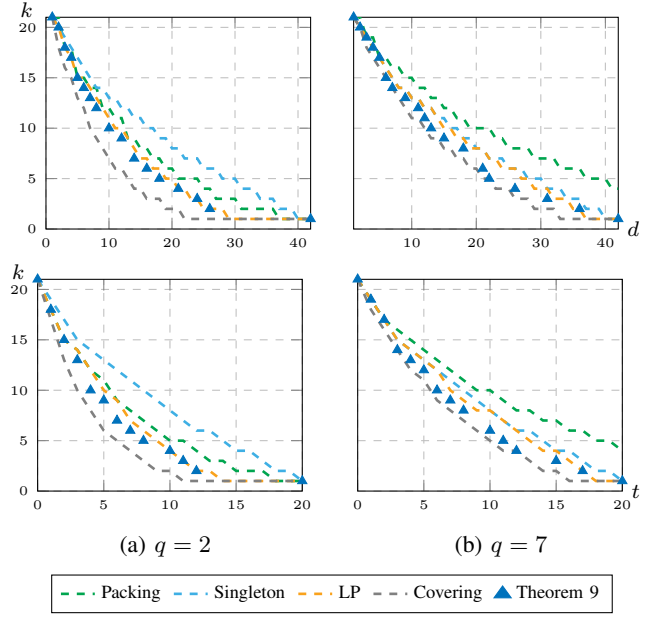

\Cref{fig:GCC_m3} presents parameters achievable by \Cref{const:gcc} for blocks of size $\6n=(7,\,7,\,7)$ and scaling factors $\bm{\lambda} = (1,\,2,\,3)$.
Codes are designed to meet minimum-distance and error-correction requirements, for both $q = 2$ and $q = 7$.
The resulting dimensions attain the covering bound in all cases, and in several cases, also the LP bound is achieved with equality.

\section{Conclusion}

This paper studies the error-correction capability of codes with the weighted-Hamming metric.
By focusing directly on correctable errors rather than the minimum distance, we obtain tighter bounds than the available ones.

Via generalized concatenation, we provide a flexible code construction with guaranteed lower bounds on both minimum distance and error-correction capability.
A modification of the classical decoder efficiently corrects all errors up to this bound, which exceeds half the minimum distance in some instances. 
For short lengths, the resulting codes achieve nearly optimal parameters.

\balance
\bibliographystyle{IEEEtran}
\bibliography{aux/ref}

\end{document}